\let\theoremstyle\undefined
\newcommand*{\gobble}[1]{}
\acrodef{mdp}[MDP]{Markov decision process}
\acrodef{pomdp}[POMDP]{Partially Observable Markov Decision Process}
\acrodef{momdp}[MOMDP]{Multi-objective MDP}
\acrodef{dfa}[DFA]{deterministic finite automaton}
\acrodef{tlmdp}[TLMDP]{terminating labeled Markov decision process}
\acrodef{lmdp}[LMDP]{labeled Markov decision process}
\acrodef{pdfa}[PDFA]{preference deterministic finite automaton}
\acrodef{pdra}[PDRA]{preference deterministic Rabin automaton}
\acrodef{cpltlf}[CPLTL$_f$]{Conditional Preference over LTL$_f$}
\acrodef{cpa}[CPA]{Conditional Preference Automaton}
\acrodef{ltl}[LTL]{linear temporal logic}
\acrodef{ltlf}[LTL$_f$]{linear temporal logic over finite traces}
 \newtheorem{corollary}{Corollary}
\newtheorem{lemma}{Lemma}
\newtheorem{proposition}{Proposition}
\newtheorem{theorem}{Theorem}
\newtheorem{remark}{Remark}
\theoremstyle{definition}
\newtheorem{definition}{Definition}
\newtheorem{problem}{Problem}
\newcommand{\refDef}[1]{Def.~\ref{#1}}
\newcommand{\refFig}[1]{Fig.~(\ref{#1})}
\newcommand{\refLma}[1]{Lma.~\ref{#1}}
\newcommand{\refProp}[1]{Proposition~\ref{#1}}
\newcommand{\refThm}[1]{Thm.~\ref{#1}}
\newcommand{\ak}[1]{\textcolor{magenta}{[AK: #1]}}
\newcommand{\ie}{i.e.}
\newcommand{\prefltlf}{PrefLTL$_f$}
\newcommand{\lang}{\mathcal{L}}
\newcommand{\rank}{\mathsf{rank}}
\newcommand{\last}{\mathsf{Last}}
\newcommand{\calA}{\mathcal{A}}
\newcommand{\calE}{\mathcal{E}}
\newcommand{\calG}{\mathcal{G}}
\newcommand{\truev}{\mathsf{true}}
\newcommand{\Always}{\Box \, }
\newcommand{\Eventually}{\Diamond \, }
\newcommand{\Next}{\bigcirc \, }
\newcommand{\until}{\mbox{$\, {\sf U}\,$}}
\newcommand{\weakpref}{\trianglerighteq}
\newcommand{\strictpref}{\triangleright}
\newcommand{\indifferent}{\sim}
\DeclareMathOperator*{\prefAnd}{\&}
\newcommand{\Paths}{\mathsf{Path}}
\newcommand{\trace}{L}
\newcommand{\reach}{\mathsf{Reach}}
\newcommand{\swin}{\mathsf{SWin}}
\newcommand{\pa}{\mathcal{P}}
\newcommand{\maxRank}{\mathsf{MaxRank}}
\newcommand{\maximal}{\mathsf{Max}}
\begin{document}

\begin{frontmatter}

\title{Nash Equilibrium in Games on Graphs with Incomplete Preferences} 


\author[UT]{Abhishek N. Kulkarni}\ead{abhishek.kulkarni@austin.utexas.edu},    
\author[UF]{Jie Fu}\ead{fujie@ufl.edu},               
\author[UT]{ Ufuk Topcu}\ead{utopcu@utexas.edu}  

\address[UT]{University of Texas at Austin, Austin, TX, USA}  
\address[UF]{University of Florida, Gainesville, FL, USA}             


\begin{abstract}                          
Games with incomplete preferences are an important model for studying rational decision-making in scenarios where players face incomplete information about their preferences and must contend with incomparable outcomes.
We study the problem of computing Nash equilibrium in a subclass of two-player games played on graphs where each player seeks to maximally satisfy their (possibly incomplete) preferences over a set of temporal goals. 
We characterize the Nash equilibrium and prove its existence in scenarios where player preferences are fully aligned, partially aligned, and completely opposite, in terms of the well-known solution concepts of sure winning and Pareto efficiency. 
When preferences are partially aligned, we derive conditions under which a player needs cooperation and demonstrate that the Nash equilibria depend not only on the preference alignment but also on whether the players need cooperation to achieve a better outcome and whether they are willing to cooperate.
We illustrate the theoretical results by solving a mechanism design problem for a drone delivery scenario.
\end{abstract}

\end{frontmatter}

\section{Introduction}
\label{sec:introduction}

Games with incomplete preferences model strategic interactions where players aim to maximally satisfy their preferences, \ie, achieve the best possible outcome for themselves, by strategically responding to the other player's actions. 
In contrast to the widely studied games that assume complete preferences \cite{baier2008planning,tumova2013least, wongpiromsarn2021, rahmani2020what}, games with incomplete preferences enable the players to make rational decisions in situations involving incomparability, where decisions must be made even when some outcomes cannot be ranked \cite{bade2005nash}. 

An important problem in the study of games with incomplete preferences is the characterization of Nash equilibrium \cite{bade2005nash}, which is a set of strategies where no player can benefit by unilaterally changing their strategy \cite{shapley1953stochastic}. 


We study the problem of characterizing Nash equilibria in a subclass of games called deterministic two-player turn-based games on graphs \cite{gradel2003automata} where the player preferences are defined over a set of temporal goals expressed as \ac{ltlf} formulas \cite{de2013linear}.
A game on graph is a widely studied model for sequential decision making, particularly useful for specifying, verifying, as well as synthesizing the behavior of reactive systems \cite{chatterjee2007algorithms}. 
While the characterization of Nash equilibrium in games on graphs has been studied for complete and lexicographic preferences \cite{bade2005nash,chatterjee2023stochastic}, this problem remains underexplored for games on graphs with incomplete preferences, which is a broader class of preferences that includes both complete and lexicographic preferences.

Computing Nash equilibria in a game on graph with incomplete preferences poses two key challenges.
First, planning with preferences over temporal goals requires the agents to simultaneously evaluate their ability to satisfy various subsets of temporal goals \cite{fu2021probabilistic,kulkarni2022opportunistic}. 
This requirement arises because, in a game on graph, an outcome (\ie, a play in the game) may satisfy multiple temporal goals. 
For example, given two temporal goals, ``go to the kitchen'' and ``go to the living room,'' a game play that first visits the kitchen and then visits living room satisfies both goals.
Consequently, it might be possible for a player to assist the other player in achieving a better outcome while still satisfying the best possible outcome for themselves. 
In other words, each player needs a way to determine \emph{when should they cooperate with the other player?}

Second, the incomparability between outcomes makes the existing approaches \cite{chatterjee2023stochastic} unsuitable for rational decision making in games on graphs with incomplete preferences.
Recent works have studied the problem of rational decision making in games on graph in presence of incomparability for single-agent planning \cite{kulkarni2022opportunistic,rahmani2023probabilistic}.
However, this problem has not been studied for games on graphs containing two or more players.



%

\textbf{Contributions.} 
We characterize the set of Nash equilibria based on whether the player preferences are fully aligned, entirely opposite, or partially aligned.
This classification is motivated by the study of multi-agent systems, where the player behaviors are categorized as cooperative, semi-cooperative, or competitive \cite{shoham2008multiagent}.
In case of preferences, fully aligned preferences results in the players being fully cooperative, completely opposite preference results in them to be fully competitive, whereas the partially aligned case motivates semi-cooperation.
We describe our contributions for each of these cases separately.



In the fully aligned case, 
we follow an automata-theoretic approach to define a product of a game on graph with preference automata \cite{rahmani2023probabilistic}.
The product transforms the player preferences over temporal goals to preferences over the states of the product game.
In this product game, we show that a pair of P1 and P2 strategies is a Nash equilibrium if and only if it induces a visit to a maximal reachable state, \ie, a most-preferred state that is reachable from the initial state.

In the completely opposite case, 
incomparability poses a significant challenge in determining the best possible outcome a player can achieve against any counter-strategy of the opponent. 
The standard approach to rational decision-making under incomparability relies on the undominance principle \cite{sen1997maximization}. However, there is currently no method available to compute an undominated strategy in a two-player game on a graph.
To this end, we introduce the concept of maximal sure winning for a player based on a weaker form of undominance.  
Maximal sure winning extends the solution concept of sure winning \cite{de2007concurrent}, which is defined for games on graphs, to satisfy the best possible temporal goal against a competitive opponent.  
We show that a pair of P1 and P2 strategies is a Nash equilibrium if and only if both strategies are maximal sure winning for the respective players.

The analysis of the completely opposite case provides an important insight.
Even though the automata-theoretic approach enables reasoning about outcomes by examining the states of the product game, computing Nash equilibria requires an in-depth analysis of paths in the product game.
Specifically, it involves determining if a player has a strategy that guarantees a path from a certain subset of paths against all potential strategies of the opponent.
This approach contrasts with common methods in the literature, such as \cite{zlotkin1990negotiation} that rely on operations purely on the set of states to determine Nash equilibria.


In the partially aligned case, players may be have an incentive to cooperative depending on the degree of alignment of their preferences.
We categorize player attitudes as either agnostic or cooperative.
An agnostic player disregards the preferences of the other player in its decision-making and tries to achieve the best possible outcome for itself. 
In this case, we propose a fixed-point polynomial-time algorithm to compute the set of Nash equilibria. 
On the other hand, a cooperative player adjusts its strategy to help the other player achieve a better outcome while also achieving the best possible outcome for itself.
In a game containing cooperative players, we derive conditions under which a player needs cooperation from the other to achieve a strictly preferred outcome.
When at most one player requires cooperation, we demonstrate that the Nash strategy for the player who does not require cooperation is a maximal sure winning strategy. 
However, this strategy limits the possible outcomes to a subset of all outcomes under that strategy, maximizing the satisfaction of the other player's preferences.
Whereas, when both players need cooperation, we show that the set of Nash equilibrium is equal to the set of Pareto equilibrium \cite{wang1993existence} unless one player has a maximal sure winning strategy that achieves an outcome strictly preferred outcome to a Pareto one.
In such a case, the set of Nash equilibria can be determined using the approach described for the case when at most one player requires cooperation.

Additionally, our characterization of Nash equilibria in various cases establishes that the set of Nash equilibria is non-empty in every deterministic two-player turn-based game on graph with incomplete preferences.

Our results are particularly useful for designing games that motivate desired behavior from players given incomplete preferences, as we demonstrate using a drone delivery scenario.
They also provide the basis for studying stochastic games with incomplete preferences, which provide key insights into rational behavior of players in various applications such as robotics \cite{mitsunaga2008adapting,adinolf2020my,correia2019choose}, economics \cite{duffie1986equilibrium,duffie1987stochastic}, and social networks \cite{salehi2015preference,rezaei2024social}. 


\textbf{Related Work. }
The study of games with preferences presents unique challenges that are not adequately addressed by traditional game theory models.
The seminal work \cite{bade2005nash} on normal-form games with preferences characterized the Nash equilibria in these games as the union of Nash equilibrium in all games where player objectives are a completion of their incomplete preferences.
However, the results in \cite{bade2005nash} require strong separability conditions \cite{bosi2012continuous,evren2011incomplete} for the completions to exist, thus limiting the applicability of this approach.


Recently, posetal games \cite{zanardi2022posetal}\footnote{Posetal games should not be confused with either poset games \cite{soltys2011complexity,byrnes2002poset} or partial order games \cite{zahoransky2021partial}. Poset games are two-player impartial combinatorial games where, on each move, a player picks an element $x$ from a set $A$ and removes all elements in $A$ that are greater than or equal to $x$, forming a smaller poset $A'$. The game continues with the other player making the next move, and the player who cannot make a move (when $A = \emptyset$) loses. On the other hand, partial order games are non-cooperative game models where players' decision nodes are partially ordered by a dependence relation, directly capturing informational dependencies in the game. Both of these formulations are distinct from the approach considered in this paper.} have demonstrated the existence of Nash equilibrium in scenarios where players express preferences over outcomes using a partially ordered set of metrics. However, posetal games are studied only for normal-form games. In contrast, we investigate a reactive game between two players, where players choose their actions based on the entire history of their interactions.
Within the class of reactive games with preferences, \cite{chatterjee2023stochastic} presents a quantitative analysis of a stochastic game with lexicographic objectives by computing optimal strategies through a sequence of single-objective games. Unlike \cite{chatterjee2023stochastic}, we propose a qualitative solution applicable to arbitrary preferences over temporal goals, that subsume lexicographic preferences.

\section{Preliminaries}
\label{sec:preliminaries}
\subsection{Interaction Model}

\begin{definition}
	\label{def:dtptb-game}
    A deterministic two-player turn-based game on graph is a tuple,
    \[
        G = (S, A, T, AP, L)
    \]
    where
    \begin{inparaenum}[]
        \item $S = S_1 \cup S_2$ is a set of states. $S_1$ is the set of P1 states and $S_2$ is the set of P2 states. $S_1$ and $S_2$ are disjoint sets.
        \item $A = A_1 \cup A_2$ is a set of actions, where $A_1, A_2$ represent the action sets P1 and P2, respectively.
        \item $T: S \times A \rightarrow S$ is a deterministic transition function.
        \item $AP$ is a set of atomic propositions.
        \item $L: S \rightarrow 2^{AP}$ is a labeling function.
    \end{inparaenum}
\end{definition}

A \emph{path} in $G$ is a (finite/infinite) sequence of states $\rho = s_0 s_1 s_2 \cdots$ such that, for every $i \geq 0$, there exists an action $a_i \in A$ such that  $s_{i+1} = T(s_i, a_i)$.
The path is said to be \emph{finite} if it terminates after a finite number of steps, otherwise it is \emph{infinite}.
We denote the set of all finite paths in $G$ by $\Paths(G)$ and that of infinite paths is denoted by $\Paths^\infty(G)$.
A finite path $\rho$ induces a finite word $\trace(\rho) = L(s_0) L(s_1) L(s_2) \cdots L(s_n) \in (2^{AP})^*$ called the \emph{trace} of $\rho$.
The last state of a finite path $\rho$ is denoted by $\last(\rho)$.
The trace of an infinite word is defined analogously.

A \emph{finite-memory}, \emph{set-based} strategy for player-$i$, $i = 1, 2$, in $G$ is a function $\pi_i: S^+ \rightarrow 2^{A}$, which maps every finite path in $\Paths(G)$ to a subset of actions.
The set of all finite-memory, set-based strategies of player-$i$ in $G$ is denoted by $\Pi_i$.
The strategy $\pi_i$ is said to be \emph{memoryless} if, for any state $s \in S$ and any two paths $\rho s, \rho's \in \Paths(G)$, we have $\pi_i(\rho s) = \pi_i(\rho' s)$.
The strategy is said to be \emph{deterministic} if, for every path $\rho \in \Paths(G)$, $\pi(\rho)$ is a singleton set.
A memoryless, deterministic strategy in $G$ is denoted as a map $\pi: S \rightarrow A_i$.
A pair of strategies $(\pi_1, \pi_2)$ is called a strategy profile.
Every strategy profile $(\pi_1, \pi_2)$ in $G$ determines a unique path denoted, with a slight abuse of notation, by $\Paths_G(s_0, \pi_1, \pi_2)$.
A strategy $\pi: S^+ \rightarrow A$ in $G$ is said to be \emph{proper} if, for every infinite path $s_0 s_1 \ldots \in \Paths^\infty(G)$, there exists an integer $n \geq 0$ such that $\pi_i(s_0 s_1 \ldots s_n)$ is undefined.
We assume all strategies considered in the paper to be proper.


\subsection{Specifying Temporal Goals}
\label{sec:ltlf}

The temporal goals of players in the game $G$ are specified formally using temporal logic formulas interpreted over finite traces \cite{de2013linear}. 

\begin{definition}
Given a   set of atomic propositions $AP$, a \ac{ltlf}  is produced by the following grammar:
\[
    \varphi \coloneqq  p \mid \neg \varphi \mid \varphi  \land \varphi  \mid \Next \varphi \mid \varphi  \until \varphi,
\]
made of atomic propositions $p \in AP$, the standard Boolean operators $\neg$ (negation) and $\land$ (conjunction), as well as temporal operators $\Next$ (``Next'') and $\until$ (``Until'').

\end{definition}
The dual of $\land$ is $\lor$ (disjunction), which is defined in the usual way using $\land$ and $\neg$, that is, $\varphi_1 \lor \varphi_2 := \neg(\neg \varphi_1 \wedge \varphi_2)$.
The temporal operators are used to specify properties of the system over sequences of time instants.
The formula $\Next \varphi$ indicates that $\varphi$ holds true at the next time instant. Formula $\varphi_1 \until \varphi_2$ means there is future time instant at which $\varphi_2$ holds and at all instant from now until that instant, $\varphi_1$ holds true.
From those temporal operators, two additional temporal operators $\Eventually$ (``Eventually'') and $\Always$ (``Always'') are defined.
The formula $\Eventually \varphi$ means $\varphi$ holds true at a future time instant, while $\Always \varphi$ means $\varphi$ holds true at the current instant and all future instants.
Formally, $\Eventually \varphi := \truev \until \varphi$ and $\Always \varphi := \neg \Eventually \neg \varphi$.
See~\cite{de2013linear} for formal semantics of \ac{ltlf}.

Every \ac{ltlf} formula over $AP$ defines a regular language over the alphabet $\Sigma = 2^{AP}$, denoted $\lang(\varphi)$.
Such a regular language can be specified by a finite automaton.

\begin{definition}
	A \ac{dfa} is a tuple $\calA = \langle Q, \Sigma, \delta, q_0, F \rangle$ where 
 	$Q$ is a finite state space.
 	$\Sigma$ is a finite alphabet. 
 	$\delta: Q \times \Sigma \rightarrow Q$ is a deterministic transition function.
    $q_0\in Q$ is an initial state, and $F\subseteq Q$ is a set of accepting (final) states.
\end{definition}


A transition from a state $q \in Q$ to a state $q' \in Q$ using input $\sigma \in \Sigma$ is denoted by $\delta(q, \sigma) = q'$. 
Slightly abusing the notation, we define the extended transition function $\delta: Q \times \Sigma^* \rightarrow Q$ as follows: $\delta(q, \sigma w) = \delta( \delta(q, \sigma), w )$ for each $w \in \Sigma^\ast$ and $\sigma \in \Sigma$, and $\delta(q, \epsilon) = q$ for each $q \in Q$, where $\epsilon$ is the empty string. The language of a \ac{dfa} $\calA$, denoted $\lang(\calA)$, consists of those words that induce a visit to an accepting state when input to the \ac{dfa}. Formally, $\lang(\calA) = \{ w \in \Sigma^* \mid \delta(q, w) \in F \}$. For every \ac{ltlf} formula $\varphi$ over $AP$, there exists a \ac{dfa} such that $\lang(A_{\varphi}) = \lang(\varphi)$ \cite{de2013linear}.

\subsection{Preference Modeling}

A preference model formally captures the notion of preferences in decision-making by representing the comparison between outcomes as a binary relation.
\begin{definition}
	\label{def:model_preference}
  	Given a countable set of outcomes $U$, a preference model over $U$ is a tuple $\langle U, \succeq \rangle$ in which $\succeq$ is \emph{preorder} on $U$, \ie, a reflexive and transitive binary relation on $U$.
\end{definition}

A preference model is defined as a binary relation on a countable and possibly infinite set $U$ \cite{bouyssou2013decision}.
A binary relation $\succeq$ on $U$ is a subset of $U \times U$.
The relation $\succeq$ is said to be \emph{reflexive} if and only if, for all $u \in U$, we have $(u, u) \in \succeq$.
It is said to be \emph{transitive} if and only if, for all $u_1, u_2, u_3 \in U$, $(u_1, u_2), (u_2, u_3) \in \succeq$ implies $(u_1, u_3) \in \succeq$.
The relation $\succeq$ is called a \emph{preorder} on $U$ if it is a reflexive and transitive binary relation on the set $U$.

The set of maximal elements in $U$ under a preorder $\succeq$ is the set $ \maximal(U, \succeq) = \{u \in U \mid \nexists u' \in U: u' \succeq u\}$. 
Similarly, the set of minimal elements in $U$ under a preorder $\succeq$ is the set $\mathsf{Min}(U, \succeq) = \{u \in U \mid \nexists u' \in U: u \succeq u'\}$.
For every non-empty $U$, the sets $\maximal(U, \succeq)$ and $\mathsf{Min}(U, \succeq)$ are always non-empty \cite{sen1997maximization}. 
The elements of the set $\maximal(U, \succeq)$ are called non-dominated or undominated elements of $U$.

We consider the preference language \prefltlf \cite{rahmani2024preference} to express preferences over temporal goals.

\begin{definition}
	\label{def:prefltlf} 
	A \prefltlf formula over a set of \ac{ltlf} formulas $\Phi$ is defined using the following grammar:
    \[
        \psi := \varphi_1  \weakpref \varphi_2 \mid \psi_1 \prefAnd \psi_2,
    \]
    where $\varphi_1$ and $\varphi_2$ are \ac{ltlf} formulas in $\Phi$, $\weakpref$ is a preference operator,
    and $\prefAnd$ is a generalized AND-operator.
    The formula $\varphi_1 \weakpref \varphi_2$ is called an \emph{atomic} preference formula.
\end{definition}

Each atomic \prefltlf~formula compares two \ac{ltlf} formulas, and accordingly, each \prefltlf formula specifies a collection of comparisons between \ac{ltlf} formulas.

We define additional preference operators to express strict preference ($\psi_1 \strictpref \psi_2$), indifference ($\psi_1 \indifferent \psi_2$), and incomparability ($\psi_1 \parallel \psi_2$).
The operators are understood as follows.
Let $\varphi_1, \varphi_2$ be two \ac{ltlf} formulas.
The formula $\varphi_1 \indifferent \varphi_2$ holds whenever $\varphi_1 \weakpref \varphi_2$ and $\varphi_2 \weakpref \varphi_1$ (\ie, $\varphi_1$ and $\varphi_2$ are \emph{indifferent});
$\varphi_1 \strictpref \varphi_2$ holds whenever $\varphi_1 \weakpref \varphi_2$ and $\varphi_2 \not\weakpref \varphi_1$ (\ie, $\varphi_1$ is \emph{strictly preferred} to $\varphi_2$ ); and
$\varphi_1 \parallel \varphi_2$ holds whenever $\varphi_1 \not\weakpref \varphi_2$ and $\varphi_2 \not\weakpref \varphi_1$ (\ie, $\varphi_1$ and $\varphi_2$ are \emph{incomparable}).


Every \prefltlf~formula defines a preorder on the set of words in $\Sigma^*$, which can be represented using a preference automaton.
A preference automaton is a computation model representing the relation defined by a \prefltlf~formula $\psi$ in \refDef{def:prefltlf}.

\begin{definition}
	\label{def:preference-automaton}
	A preference automaton for an alphabet $\Sigma$ is a tuple
	$$\pa= \langle Q, \Sigma, \delta, q_0, E \rangle,$$
	where
	\begin{inparaenum}[]
		\item $Q$ is a finite set of states.
		\item $\Sigma$ is the alphabet.
		\item $\delta: Q \times \Sigma \rightarrow Q$ is a deterministic transition function.
		\item $q_0 \in Q$ is the initial state.
		\item $E \subseteq Q \times Q$ is a preorder on $Q$.
	\end{inparaenum}
\end{definition}

Note that Def.~\ref{def:preference-automaton} augments the semi-automaton $\langle Q, \Sigma, \delta, q_0 \rangle$ with the preference relation $E$, instead of a set of accepting (final) states as is typical with a \ac{dfa}.
We write $q \succeq_E q'$ to denote that state $q$ is weakly preferred to $q'$ under preorder $E$.

The preference automaton encodes a preference relation $\succeq$ on $\Sigma^*=(2^{AP})^*$ as follows.
Consider two words $w, w' \in \Sigma^*$.
Let $q, q' \in Q$ be the two states such that $q = \delta(q_0, w)$ and $q' = \delta(q_0, w')$.
There are four cases:
\begin{inparaenum}[(a)]
	\item If $(q, q') \in E$ and $(q', q) \notin E$, then $w \succ w'$;
	\item If $(q, q') \notin E$ and $(q', q) \in E$, then $w' \succ w$;
	\item If $(q, q') \in E$ and $(q', q) \in E$, then $w \sim w'$;
	\item If $(q, q') \notin E$ and $(q', q) \notin E$, then $w \parallel w'$.
\end{inparaenum}

The procedure to construct a preference automaton from a \prefltlf~formula is enlisted in \cite{rahmani2023probabilistic}\footnote{A tool to translate a \prefltlf~formula to a preference automaton is available at \url{https://akulkarni.me/prefltlf2pdfa.html}.}.

\section{Problem Formulation}
\label{sec:problem-formulation}

A deterministic game with incomplete preference is a deterministic game $G$ (see \refDef{def:dtptb-game}) in which players aim to maximally satisfy their (possibly incomplete) preferences expressed as \prefltlf~formulas $\psi_1, \psi_2$, respectively, over a set of \ac{ltlf} formulas $\Phi$.
We denote a deterministic game with incomplete preference as a tuple, $\langle G, \Phi, \psi_1, \psi_2 \rangle$.

%
%
%
%
%

A Nash equilibrium in $\calG$ is a strategy profile such that no player can achieve a strictly preferred outcome by unilaterally changing their strategy. 
The following definition adapts the standard definition of Nash equilibrium \cite{bade2005nash} to a deterministic game with preferences.




\begin{definition}
	\label{def:nash}
	Let $\succeq_i$ be the preference relation induced by $\psi_i$ on the set of finite words $\Sigma^* = (2^{AP})^*$. 
	A strategy profile $(\pi_1^*, \pi_2^*)$ is a \ak{pure} Nash equilibrium in $\calG$ if and only if the following conditions hold.
	\begin{enumerate}[i)]
		\item There does not exist a strategy $\pi_1 \in \Pi_1$ such that $\Paths_\calG(s_0, \pi_1, \pi_2^*) \succeq_1 \Paths_\calG(s_0, \pi_1^*, \pi_2^*)$.
		
		\item There does not exist a strategy $\pi_2 \in \Pi_2$ such that $\Paths_\calG(s_0, \pi_1^*, \pi_2) \succeq_2 \Paths_\calG(s_0, \pi_1^*, \pi_2^*)$.
	\end{enumerate}
\end{definition}

We now state our problem statement.

\begin{problem}
	\label{prob:synthesis-with-preferences}
	Given a deterministic game with incomplete preferences $\calG = \langle G, \Phi, \psi_1, \psi_2 \rangle$, determine the set of Nash equilibria in $\calG$.
\end{problem}

\section{Main Results}
\label{sec:main-results}

In this section, we characterize the set of Nash equilibrium in games with incomplete preferences.
A key insight we obtain is that Nash equilibria can be characterized based on the well-known solution concepts of sure winning and Pareto efficiency in game theory, depending on different alignment between player preferences and whether the players are cooperative or agnostic.  
Hence, we study the characterization of Nash equilibria for each case separately.


We follow an automata-theoretic approach to characterize the Nash equilibria since it enables transforming a preference relation over \ac{ltlf} objectives to a preference relation over states of the product game defined below.
%

%
%
%

\begin{definition}
	\label{def:product-game}
	Given a game $\calG = (G, \Phi, \psi_1, \psi_2)$ and the preference automata $\pa_1 = (Q, \Sigma, \delta, q_{0}, E_1)$  and $\pa_2 = (Q, \Sigma, \delta, q_{0}, E_2)$ induced by players' preferences $\psi_1, \psi_2$, respectively, the product game is a tuple,
	\begin{align*}
		H = \langle V, A, \Delta, v_0, \calE_1, \calE_2\rangle,
	\end{align*}
	where
	\begin{inparaenum}[]
		\item $V = S \times Q$ is the set of states.  
		$V_1 = S_1 \times Q$ are P1 states and $V_2 = S_2 \times Q$ are P2 states.
		\item $A = A_1 \cup A_2$ is the set of actions.
		\item $\Delta: V \times A \rightarrow V$ is a deterministic transition function.
		Given two states $v = (s, q), v' = (s', q') \in V$ and an action $a \in A$, we have $\Delta(v, a) = v'$ if and only if $s' = T(s, a)$ and $q' = \delta(q_{0}, L(s'))$.
		\item $v_0 = (s_0, \delta(q_{0}, L(s_0)))$ is the initial state.
		\item $\calE_1, \calE_2$ are preorders on $V$ such that $(s, q) \succeq_{\calE_1} (s', q')$ if and only if $q \succeq_{E_1} q'$ and  $(s, q) \succeq_{\calE_2} (s', q')$ if and only if $q_2 \succeq_{E_2} q_2'$.
	\end{inparaenum}
\end{definition}

Notice that the state space of $H$ is defined as $S \times Q$ instead of the usual $S \times Q \times Q$ used when defining the product of a game with two automata \cite{rahmani2023probabilistic}. 
This compact state representation is possible because the two automata, $\pa_1$ and $\pa_2$, share the same semi-automaton $\langle Q, \Sigma, \delta, q_0 \rangle$, which is defined by the shared set of alternatives $\Phi$ for both players in $\calG$. 
The components $A, \Delta, v_0$ of the product game are defined according to standard game and DFA product construction \cite{baier2008principles}. The preorder relations $\calE_1$ and $\calE_2$ on $V$ are lifted from the relations $E_1$ and $E_2$. 
For instance, P1 strictly prefers a state $(s, q)$ to $(s', q')$ if $q$ is strictly preferred to $q'$ under P1's preference relation $E_1$ in its automaton $\pa_1$.

The following proposition establishes that a product game transforms the preference over satisfying \ac{ltlf} objectives in $G$ to a preference relation over states in $H$.
First, we define a notation. 
Every path $\rho = s_0 s_1 \ldots s_n$ in $G$ induces a unique path $\varrho = v_0 v_1 \ldots v_n$ in $H$ where, for all $j = 0, \ldots, n$, $v_j = (s_i, q_j)$ and $q_j = \delta(q_0, L(s_0 s_1 \ldots s_j))$. 
We call the path $\varrho$ as the \emph{trace} of $\rho$ in $H$.

\begin{proposition}
	\label{prop:pref(paths in G)-to-pref(states-in-H)}
	Given any finite paths $\rho, \rho'$ in $G$, let $\varrho, \varrho'$ be their traces in $H$.
	Then, for $i = 1, 2$, $L(\rho) \succeq_i L(\rho')$ if and only if $\last(\varrho) \succeq_{\calE_i} \last(\varrho')$.
\end{proposition}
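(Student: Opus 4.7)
The plan is to prove the proposition by definitional unfolding on both sides to reveal a common intermediate condition, namely $\delta(q_0, L(\rho)) \succeq_{E_i} \delta(q_0, L(\rho'))$. Both the word-level preference $L(\rho) \succeq_i L(\rho')$ and the state-level preference $\last(\varrho) \succeq_{\calE_i} \last(\varrho')$ collapse to this condition via a short chain of definitions, and the two are therefore equivalent.

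First I would identify the $Q$-coordinate of $\last(\varrho)$ with $\delta(q_0, L(\rho))$. This follows directly from the trace construction in $H$: if $\rho = s_0 s_1 \ldots s_n$, then $\varrho = v_0 v_1 \ldots v_n$ with $v_j = (s_j, q_j)$ and $q_j = \delta(q_0, L(s_0) L(s_1) \ldots L(s_j))$, so $\last(\varrho) = (s_n, q_n)$ with $q_n = \delta(q_0, L(\rho))$. The analogous identification applies to $\rho' = s'_0 \ldots s'_m$ and $\varrho'$, yielding $\last(\varrho') = (s'_m, q'_m)$ with $q'_m = \delta(q_0, L(\rho'))$. This step tacitly relies on the fact that $\pa_1$ and $\pa_2$ share the same underlying semi-automaton $\langle Q, \Sigma, \delta, q_0 \rangle$, which is exactly why the product game $H$ needs only one $Q$-component to track both players' preference evaluations simultaneously.

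Second I would invoke the preference-automaton semantics stated immediately after Def.~\ref{def:preference-automaton}, which gives $w \succeq_i w'$ iff $(\delta(q_0, w), \delta(q_0, w')) \in E_i$, i.e.\ $\delta(q_0, w) \succeq_{E_i} \delta(q_0, w')$. Specialising to $w = L(\rho)$ and $w' = L(\rho')$ and combining with the first step, the word-level side becomes $q_n \succeq_{E_i} q'_m$. On the state-level side, the lifting rule in Def.~\ref{def:product-game} reads $(s, q) \succeq_{\calE_i} (s', q')$ iff $q \succeq_{E_i} q'$, which likewise reduces $\last(\varrho) \succeq_{\calE_i} \last(\varrho')$ to the same condition $q_n \succeq_{E_i} q'_m$. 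Chaining the two reductions yields the proposition in both directions.

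No step presents a genuine obstacle; the proposition is essentially the statement that the product construction internalises the word-level preference correctly, so the proof is bookkeeping. The one sanity check worth calling out is that the sequence of labels $L(s_j)$ read along $\varrho$ matches $L(\rho)$ letter-by-letter, which holds because $\Delta$ preserves the $S$-coordinate via the original transition $T$. I would write the final proof as a single bidirectional chain of \emph{iff}s so that both directions of the equivalence are discharged simultaneously.
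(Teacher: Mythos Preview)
Your proposal is correct and follows essentially the same route as the paper's proof: both reduce each side of the equivalence to the common intermediate condition $\delta(q_0, L(\rho)) \succeq_{E_i} \delta(q_0, L(\rho'))$ via the preference-automaton semantics and the lifting of $E_i$ to $\calE_i$. Your write-up is somewhat more explicit about the bookkeeping (identifying the $Q$-coordinate of $\last(\varrho)$ and noting the shared semi-automaton), but the argument is the same.
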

\begin{proof}
	Suppose that $L(\rho) \succeq_i L(\rho')$ holds. 
	By definition of the preference automaton, $L(\rho) \succeq_i L(\rho')$ implies $q \succeq_{E_i} q'$, where $q = \delta(q_0, L(\rho))$ and $q' = \delta(q_0, L(\rho'))$. 
	Given that the preference relation $\calE_i$ in the product game is a lifting of the relation $E_i$, it follows that $(s, q) \succeq_{\calE_i} (s', q')$, where $(s, q) = \last(\varrho)$ and $(s', q') = \last(\varrho')$. 
	Since the converse of each aforementioned statement is true by definition, the proposition is thus established.
\end{proof}


\refProp{prop:pref(paths in G)-to-pref(states-in-H)} implicitly defines a preference relation on set of strategy profiles in $H$, because every strategy profile defines a unique path in $H$.
Given two strategy profiles, $(\pi_1, \pi_2)$ and $(\pi_1', \pi_2')$, $(\pi_1, \pi_2)$ is weakly preferred to $(\pi_1', \pi_2')$ for player-$i$ if and only if the last state visited by the path $\Paths_H(v_0, \pi_1, \pi_2)$ is strictly preferred to that visited by $\Paths_H(v_0, \pi_1', \pi_2')$ under $\calE_i$. 
We say $(\pi_1, \pi_2)$ dominates $(\pi_1', \pi_2')$ when $(\pi_1, \pi_2)$ is strictly preferred to $(\pi_1', \pi_2')$.

%
%
%
%
%
%
%

Next, we discuss characterization of Nash equilibrium strategies for the three cases of preference alignment.

\subsection{Fully Aligned Preferences}
The player preferences are said to be fully aligned when $\calE_1 = \calE_2$. 
In this case, any unilateral change in strategy that benefits one player also benefits the other player.
Consequently, every Nash equilibrium must satisfy the maximal outcome that is realizable under full cooperation between the players. 


Let $\Pi_1, \Pi_2$ be the set of all P1 and P2 strategies in $H$. 
We use $\reach_H(v)$ to denote the subset of $V$ that is reachable from a state $v \in V$.
\begin{align*}
	\reach_H(v) = \{v'  &\in V \mid \exists \pi_1 \in \Pi_1, \exists \pi_2 \in \Pi_2: \\ 
	&\rho = \Paths_H(v_0, \pi_1, \pi_2) \text{ and } \last(\rho) = v'\}
\end{align*}
Given a preorder $\calE$, a state $v \in V$, and the set $U = \reach_H(v)$, every state in $\maximal(U, \calE)$ is called the maximal reachable state from $v$ under $\calE$.

\begin{theorem}
	\label{thm:aligned:nash}
	A strategy profile $(\pi_1, \pi_2)$ is a Nash equilibrium in $H$ if and only if the last state of the path $\Paths_H(v_0, \pi_1, \pi_2)$ is a maximal reachable state from $v_0$ under $\calE = \calE_1 = \calE_2$.
\end{theorem}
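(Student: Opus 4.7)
The plan is to prove both directions of the biconditional separately, relying crucially on \refProp{prop:pref(paths in G)-to-pref(states-in-H)}, which transfers preference comparison of traces in $G$ to preference comparison of the last states of the induced paths in $H$ under the shared preorder $\calE = \calE_1 = \calE_2$. With this lifting, the Nash condition becomes: no unilateral deviation alters the final state of the induced path to one strictly preferred under $\calE$. I would work throughout in $H$, so that maximality in $\reach_H(v_0)$ and the Nash condition are both statements about terminal vertices in the product game.

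For the sufficiency direction ($\Leftarrow$), I assume that $v^* := \last(\Paths_H(v_0, \pi_1, \pi_2))$ lies in $\maximal(\reach_H(v_0), \calE)$ and argue by contradiction. Suppose P1 had a strictly improving deviation $\pi_1'$ yielding outcome $v' := \last(\Paths_H(v_0, \pi_1', \pi_2))$. The profile $(\pi_1', \pi_2)$ is itself a witness that $v' \in \reach_H(v_0)$, while $\calE_1 = \calE$ forces $v' \succ_\calE v^*$, contradicting the maximality of $v^*$. By symmetry under $\calE_2 = \calE$, the same conclusion holds against any P2 deviation, and $(\pi_1, \pi_2)$ is therefore Nash.

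For the necessity direction ($\Rightarrow$), I would proceed by contrapositive: assume $v := \last(\Paths_H(v_0, \pi_1, \pi_2))$ is not maximal in $\reach_H(v_0)$, so there exists $v' \in \reach_H(v_0)$ with $v' \succ_\calE v$, realized by some profile $(\bar\pi_1, \bar\pi_2)$. The goal is to exhibit a single-player deviation from $(\pi_1, \pi_2)$ producing an outcome strictly preferred to $v$. The main obstacle, and the hardest part of the proof, is that $(\bar\pi_1, \bar\pi_2)$ generally differs from $(\pi_1, \pi_2)$ in both coordinates, while the Nash condition only rules out single-coordinate deviations, so the witness profile cannot be used verbatim.

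To bridge this gap, I would examine the first state $x$ at which $\Paths_H(v_0, \pi_1, \pi_2)$ and $\Paths_H(v_0, \bar\pi_1, \bar\pi_2)$ diverge; by the turn-based structure, $x$ belongs to exactly one player $i \in \{1, 2\}$. If the unilateral substitution of $\pi_i$ by $\bar\pi_i$ already induces a path whose last state is strictly preferred to $v$, the proof is complete. Otherwise, the induced path encounters a further divergence at a state of the opposing player, and I would iterate the argument inductively along the path toward $v'$, leveraging $\calE_1 = \calE_2$ to ensure that each intermediate switch remains a strict improvement for whichever player is currently deviating. Formalizing this inductive switching argument and certifying its termination on the finite path to $v'$ is the technically delicate step, and it is precisely where the full alignment of the two preorders plays a decisive role, since any branch that a deviating player avoids would otherwise require coordinated play from the opponent.
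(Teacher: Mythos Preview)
Your sufficiency argument (maximal reachable implies Nash) is correct and essentially matches the paper: any unilateral deviation still lands inside $\reach_H(v_0)$, and maximality of $v^*$ under $\calE$ rules out a strict improvement for either player, since $\calE_1=\calE_2=\calE$.

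For the necessity direction, your contrapositive setup and your identification of the two-coordinate obstruction are exactly right, but the inductive switching you propose cannot be carried through. The failure point is incomparability. After you replace the strategy of the player who controls the first divergence state $x$, the new terminal vertex reached under the \emph{opponent's fixed} strategy need not be comparable to $v$ under $\calE$; when it is incomparable, that switch is not a profitable deviation, and you have no foothold to continue the induction. Concretely: let $v_0$ be a P1 state with moves to a P2 state $v_1$ and to a terminal $v_3$; from $v_1$ let P2 have moves to terminals $v_2$ and $v_4$; take $v_2 \succ_{\calE} v_3$, $v_2 \succ_{\calE} v_4$, and $v_3 \parallel_{\calE} v_4$. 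The profile in which P1 goes to $v_3$ while P2 (off-path) would choose $v_4$ terminates at the non-maximal $v_3$, yet P1's only deviation yields $v_4$ (incomparable to $v_3$) and P2 never acts, so neither player has a strictly improving unilateral deviation. The alignment $\calE_1=\calE_2$ gives no leverage here, because the obstruction is not a conflict of interest but a gap in the preorder. The paper's own proof does not close this direction either: both of its labeled halves in fact argue the same implication (maximal $\Rightarrow$ Nash), so the Nash $\Rightarrow$ maximal claim is left without justification there as well, and your counterexample above shows it requires an additional hypothesis (e.g., totality of $\calE$ on the relevant terminal states) or a weakened conclusion.
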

\begin{proof}
	($\implies$).
	We will prove that $(\pi_1, \pi_2)$ is a Nash equilibrium if $\rho = \Paths_H(v_0, \pi_1, \pi_2)$ visits a maximal reachable state from $v_0$.
	For this, two conditions must hold. 
	First, there does not exist a P1 strategy $\pi_1' \in \Pi_1$ such that $(\pi_1', \pi_2)$ strictly dominates $(\pi_1, \pi_2)$. 
	Second, there does not exist a P2 strategy $\pi_2' \in \Pi_2$ such that $(\pi_1, \pi_2')$ strictly dominates $(\pi_1, \pi_2)$.
	Consider the first case.
	For $(\pi_1', \pi_2)$ to strictly dominate $(\pi_1, \pi_2)$, the last state of the path $\Paths_H(v_0, \pi_1', \pi_2)$ must be strictly preferred to the last state visited by $\rho$.
	However, this is not possible because the last state of $\rho$ is a maximal state in $\maximal(\reach_H(v_0), \calE)$.
	Therefore, the first condition must be true.
	For a similar reason, the second condition also holds.

	($\impliedby$).
	If the path $\rho$ terminates at a maximal reachable state from $v_0$ under $\calE$, then $(\pi_1, \pi_2)$ is an undominated strategy profile since there is no state that is reachable from $v_0$ and strictly preferred to $\last(\rho)$.
	Therefore, $(\pi_1, \pi_2)$ must be a Nash equilibrium.
\end{proof}

From \refThm{thm:aligned:nash}, it follows that the set of Nash equilibria in any game $H$ is non-empty because $\reach(v_0)$ contains at least $v_0$ itself.
Since there can be more than one maximal states in $V$ under $\calE$, there may exist multiple Nash equilibria.

\begin{corollary}
	In every game $H$, there exists at least one Nash equilibrium. There exist games with more than one Nash equilibria.
\end{corollary}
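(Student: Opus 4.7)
The plan is to obtain both claims as direct consequences of \refThm{thm:aligned:nash} together with the basic property of preorders that every non-empty set admits a maximal element (stated in \refSec{sec:preliminaries}). For the first claim, I would first observe that $\reach_H(v_0)$ is non-empty: taking any P1 strategy $\pi_1$ and any P2 strategy $\pi_2$, the induced path $\Paths_H(v_0, \pi_1, \pi_2)$ is well-defined (and finite since all strategies are assumed proper), so its last state lies in $\reach_H(v_0)$; in particular $v_0$ itself is reachable via the trivial strategy profile that declares $\pi_i(v_0)$ undefined. Invoking the fact that $\maximal(U, \calE)$ is non-empty whenever $U$ is non-empty, I pick some $v^* \in \maximal(\reach_H(v_0), \calE)$. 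By the definition of $\reach_H$, there exist strategies $\pi_1^*, \pi_2^*$ whose induced path terminates at $v^*$, and \refThm{thm:aligned:nash} then certifies $(\pi_1^*, \pi_2^*)$ as a Nash equilibrium.

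For the second claim, I would exhibit a minimal concrete game $H$ whose preorder $\calE$ admits two incomparable maximal reachable states. The cleanest construction is a one-step game where a single P1 state $s_0$ has two enabled actions $a, b$ leading to distinct terminal states $s_a, s_b$ whose labels yield product states $(s_a, q_a)$ and $(s_b, q_b)$ such that $q_a \parallel_{E} q_b$ in the shared preference automaton $\pa_1 = \pa_2$. Such an automaton can be built from a \prefltlf{} specification containing only the single atomic formula $\varphi_a \weakpref \varphi_b$ where $\varphi_a$ and $\varphi_b$ are \ac{ltlf} formulas that are not comparable under $\weakpref$ (e.g., chosen so that neither $\varphi_a \weakpref \varphi_b$ nor $\varphi_b \weakpref \varphi_a$ is derivable, making $q_a$ and $q_b$ incomparable by the semantics stated after \refDef{def:preference-automaton}). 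Then both $(s_a, q_a)$ and $(s_b, q_b)$ are maximal reachable states from $v_0$, and the two distinct strategy profiles that pick $a$ and $b$ respectively at $s_0$ are two distinct Nash equilibria by \refThm{thm:aligned:nash}.

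I expect no significant obstacle here since both halves are essentially observations about \refThm{thm:aligned:nash}. The only subtlety is the first half's appeal to properness: I must confirm that once a maximal reachable state $v^*$ is fixed, I can extract proper strategies $\pi_1^*, \pi_2^*$ that realize it. This is immediate because the witnessing path from the definition of $\reach_H$ is already finite, so I can define $\pi_1^*$ and $\pi_2^*$ to follow that path and to be undefined at $v^*$ and at off-path histories (or extended arbitrarily to proper strategies elsewhere, since off-path behavior does not affect $\Paths_H(v_0, \pi_1^*, \pi_2^*)$). The second half's only care is ensuring the example actually satisfies Def.~\ref{def:preference-automaton}: reflexivity and transitivity of $E$ hold vacuously on singletons plus the diagonal, so admitting two incomparable non-initial states causes no issue.
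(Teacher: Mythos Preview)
Your approach matches the paper's: both parts are read off directly from \refThm{thm:aligned:nash}, using that $\reach_H(v_0)$ is non-empty (it contains $v_0$) so $\maximal(\reach_H(v_0),\calE)$ is non-empty, and that multiple maximal reachable states yield multiple equilibria. The paper's justification is in fact just the one-sentence remark preceding the corollary, so your level of detail already exceeds it.

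One small slip in your example for the second claim: you cannot take the \prefltlf{} specification to \emph{be} the atomic formula $\varphi_a \weakpref \varphi_b$ and simultaneously ask that ``neither $\varphi_a \weakpref \varphi_b$ nor $\varphi_b \weakpref \varphi_a$ is derivable''---the first is derivable by assumption. To produce two incomparable automaton states you want a specification that leaves two alternatives unrelated, for instance $\psi = (\varphi_a \weakpref \varphi_0) \prefAnd (\varphi_b \weakpref \varphi_0)$ with no clause relating $\varphi_a$ and $\varphi_b$; then the states reached by words satisfying only $\varphi_a$ and only $\varphi_b$ are incomparable under $E$, and your one-step game construction goes through. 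This is a local fix and does not affect the structure of your argument.
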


\subsection{Completely Opposite Preferences}

The player preferences are said to be completely opposite whenever $v \succeq_{\calE_1} v'$ implies $v' \succeq_{\calE_2} v$.
That is, if P1 strictly prefers a state $v$ to $v'$ then P2 strictly prefers $v'$ to $v$.

\begin{proposition}
	When player preferences are completely opposite, any two states in $V$ that are incomparable under $\calE_1$ are also incomparable under $\calE_2$
\end{proposition}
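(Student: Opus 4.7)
The plan is to argue by unfolding the definition of incomparability and applying contrapositives to each of its two constituent non-relations. Recall that $v \parallel_{\calE_1} v'$ means $v \not\succeq_{\calE_1} v'$ and $v' \not\succeq_{\calE_1} v$, and the goal is to derive the analogous pair $v \not\succeq_{\calE_2} v'$ and $v' \not\succeq_{\calE_2} v$, which together say $v \parallel_{\calE_2} v'$.

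The key step is to read the definition of ``completely opposite'' symmetrically, i.e., as the biconditional $v \succeq_{\calE_1} v' \iff v' \succeq_{\calE_2} v$. The excerpt states one direction explicitly, and the reverse direction is the natural (and, I believe, intended) content of the word ``opposite'': without it, $\calE_2$ could contain arbitrary pairs beyond the transpose of $\calE_1$, and the proposition would fail in general. Taking contrapositives of each direction yields the equivalences $v \not\succeq_{\calE_1} v' \iff v' \not\succeq_{\calE_2} v$ and $v' \not\succeq_{\calE_1} v \iff v \not\succeq_{\calE_2} v'$. Applying these to the two clauses of $\calE_1$-incomparability immediately produces the two clauses of $\calE_2$-incomparability.

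The main obstacle, as identified above, is really just this interpretive point about the definition; no machinery from earlier in the paper (the product construction of \refDef{def:product-game}, preference automata, or \prefltlf{} formulas) plays any role. Once the biconditional reading of ``completely opposite'' is fixed, the argument reduces to a pair of contrapositive applications, so the proof should be essentially two lines. If the author prefers to keep the definition as a one-way implication, then the proposition would need an explicit additional hypothesis (e.g., that $\calE_2$ is exactly the transpose of $\calE_1$), after which the same two-line contrapositive argument goes through unchanged.
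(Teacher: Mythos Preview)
Your proposal is correct and follows essentially the same route as the paper: unfold $\parallel_{\calE_1}$ into its two non-relations and transfer each to $\calE_2$ via the ``completely opposite'' definition. Your explicit flagging of the biconditional reading is apt; the paper's proof tacitly uses exactly that converse direction (it passes from $(v,v'),(v',v)\notin\calE_1$ directly to $(v,v'),(v',v)\notin\calE_2$), so your interpretive point is not a deviation from the paper but a clarification of an assumption the paper's own argument already relies on.
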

\begin{proof}
	Using a set-theoretic representation of preorders, we write $(v, v') \in \calE_1$ to represent $v \succeq_{\calE_1} v'$. 
	%
	Let $v, v' \in V$ be two states such that $v \parallel_{\calE_1} v'$. 
	Since $v \parallel_{\calE_1} v'$ holds if and only if $v \not\succeq_{\calE_1} v'$ and $v' \not\succeq_{\calE_1} v$, neither $(v, v')$ nor $(v', v)$ is an element of $\calE_1$. 
	By definition of completely opposite preferences, it follows that neither $(v, v')$ nor $(v', v)$ belong to $\calE_2$. 
	That is, $v$ and $v'$ are incomparable under $\calE_2$.
\end{proof} 

Unlike games with fully aligned preferences, where Nash equilibria can be found by computing the maximal reachable states, games with completely opposite preferences require a different approach.
In these games, a better outcome for one player means a worse outcome for the other. 
As a result, each player aims to synthesize a strategy that achieves the best possible outcome for themselves while preventing their opponent from improving their outcome.

We introduce the concept of a \emph{maximal sure winning strategy} for a player, which ensures achieving the best possible outcome against any strategy the opponent might employ. 
A maximal sure winning strategy extends the concept of sure winning in games on graphs \cite{de2007concurrent}.
Intuitively, it enables the player to react to each potential move by the opponent by reasoning about multiple outcomes simultaneously, choosing an action that achieves the best possible outcome for that player.


Given that a player’s best achievable outcome might not be their most-preferred,
we need a way to compare the quality of a given outcome to the most-preferred outcome for that player.
For this purpose, we introduce the notion of a rank.



\begin{definition}
	\label{def:rank}
	Given a preorder $\calE$ on $V$, let $Z_0 = \maximal(V, \calE)$ and $Z_k$ be defined inductively as $Z_{k+1} = \maximal(V \setminus \bigcup\limits_{j=0}^{k} Z_{j}, \calE)$. Then, given a state $v \in V$, the smallest integer $k \geq 0$ such that $v \in Z_k$ is called the rank of state $v$, denoted by $\rank_\calE(v) = k$.
\end{definition}

\refDef{def:rank} assigns a unique, finite rank to every state in $V$ given a preorder $\calE$. 
Since the set $\maximal(U, \calE)$ is non-empty for any non-empty subset $U \subseteq V$, the inductive assignment of ranks terminates only when the subset $V \setminus \bigcup\limits_{j=0}^{k} Z_{j}$ is empty, \ie, when a rank has been assigned to all states in $V$.
Additionally, the sets $Z_0, Z_1 \ldots$ are mutually exclusive and exhaustive subsets of $V$. 
Therefore, every state in $V$ has a unique rank under a given preorder.


\begin{proposition}
	\label{prop:rank-comparison}
	The following statements hold for any two states $v, v' \in V$,
	\begin{enumerate}
		\item If $\rank_\calE(v) = \rank_\calE(v')$ then either $v \sim_{\calE} v'$ or $v \parallel_{\calE} v'$.
		\item If $\rank_\calE(v) > \rank_\calE(v')$ then $v \not\succeq_{\calE} v'$.
		\item If $v \succ_{\calE} v'$ then $\rank_\calE(v) < \rank_\calE(v')$.
	\end{enumerate}
\end{proposition}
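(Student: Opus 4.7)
My plan is to prove the three parts in order, with part (3) following almost immediately from parts (1) and (2). Throughout, I would rely on the observation, implicit in \refDef{def:rank}, that the sets $Z_0, Z_1, \ldots$ partition $V$, so each state has a uniquely determined rank. For convenience I will write $U_k \coloneqq V \setminus \bigcup_{j<k} Z_j$ for the ``residual'' set from which the rank-$k$ layer is extracted, so that $Z_k = \maximal(U_k, \calE)$.

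For part (1), I would observe that two states $v, v'$ of the same rank $k$ both lie in $U_k$ and are both maximal there under $\calE$. Since each is a candidate dominator for the other, neither $v \succ_{\calE} v'$ nor $v' \succ_{\calE} v$ can hold; among the four mutually exclusive ways two elements of a preorder can relate, this leaves only $v \sim_{\calE} v'$ or $v \parallel_{\calE} v'$, which is exactly the claim.

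For part (2), I would argue by contradiction. Suppose $\rank_{\calE}(v) > \rank_{\calE}(v') = k'$ yet $v \succeq_{\calE} v'$. Because $v$ has strictly larger rank than $v'$, $v$ still belongs to $U_{k'}$, the same set in which $v'$ is maximal. A strict preference $v \succ_{\calE} v'$ would immediately contradict that maximality, so I would reduce to the case $v \sim_{\calE} v'$. The main subtle step, and what I expect to be the principal obstacle, is then showing that $v \sim_{\calE} v'$ forces $v$ to be maximal in $U_{k'}$ as well: if some $u \in U_{k'}$ satisfied $u \succ_{\calE} v$, I would chain transitivity with $v \sim_{\calE} v'$ to deduce $u \succeq_{\calE} v'$, invoke maximality of $v'$ to obtain $v' \succeq_{\calE} u$, and finally compose with $v \succeq_{\calE} v'$ to obtain $v \succeq_{\calE} u$, contradicting $u \succ_{\calE} v$. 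Once $v$ is shown to be maximal in $U_{k'}$, it must lie in $Z_{k'}$, contradicting uniqueness of rank since $\rank_{\calE}(v) > k'$.

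Finally, for part (3), I would observe that $v \succ_{\calE} v'$ rules out equal ranks by part (1), since strict preference is neither indifference nor incomparability, and rules out $\rank_{\calE}(v) > \rank_{\calE}(v')$ by part (2), since $v \succ_{\calE} v'$ implies $v \succeq_{\calE} v'$. The only remaining possibility is $\rank_{\calE}(v) < \rank_{\calE}(v')$, completing the proposition.
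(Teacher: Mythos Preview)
Your proof is correct and follows essentially the same approach as the paper's: both arguments hinge on the fact that two states of equal rank are simultaneously maximal in the same residual set, and that a state of strictly larger rank still lies in the residual set where the other state is maximal. Your treatment of part~(2) is actually more careful than the paper's, which simply asserts that $v \succeq_{\calE} v'$ forces $v \in Z_{k'}$ ``by definition''; you correctly recognize that the indifference case $v \sim_{\calE} v'$ requires a short transitivity argument to push through, and you supply it. For part~(3), the paper just says the statement ``follows by a similar argument as~(2),'' whereas you derive it cleanly as a corollary of~(1) and~(2) by trichotomy on the ranks; your route is slightly more economical but not materially different.
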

\begin{proof}
	(1) We will first show that when $\rank_\calE(v) = \rank_\calE(v') = k$, neither $v \succ_\calE v'$ nor $v' \succ_\calE v$ can be true. 
	Suppose that $v \succ_{\calE} v'$ is true. 
	Then, by \refDef{def:rank}, both states $v$ and $v'$ must be elements of $Z_k$, which means that $v$ and $v'$ must be elements of the set $\maximal(Y, \calE)$ where $Y = V \setminus (Z_0 \cup Z_1 \cup \ldots \cup Z_{k-1})$. 
	But $v$ and $v'$ cannot both be maximal elements of $Y$ because $v \succ_{\calE} v'$, which contradicts our supposition. 
	A similar argument can be used to establish that $v' \succ_{\calE} v$. 
	If neither $v \succ_{\calE} v'$ nor $v' \succ_{\calE} v$ is true, then it must be the case that either $v \sim_{\calE} v'$ or $v \parallel_{\calE} v'$.

	(2) Let $\rank_\calE(v') = k$. 
	If $\rank_\calE(v) > \rank_\calE(v')$ then, by \refDef{def:rank}, $v$ and $v'$ are both included in the set $Y = V \setminus (Z_0 \cup Z_1 \cup \ldots \cup Z_{k-1})$. 
	If $v \succeq_\calE v'$, then by definition it must be included in $Z_k = \maximal(Y, \calE)$. 
	Since this is not the case, the statement $v \not\succeq_{\calE} v'$ must be true.
	
	(3) The statement follows by a similar argument as (2).
\end{proof}

\refProp{prop:rank-comparison} establishes the relationship between preference between two states and the comparison of their ranks. 
First, it states that any two states states with equal ranks are either indifferent or incomparable to each other under the given preorder. 
Second, it states that a state with a higher rank is no better than one with a lower rank.
Lastly, it states that a state that is strictly preferred to another has a strictly smaller rank than the other.

\begin{remark}
	We note that the converse of statements in \refProp{prop:rank-comparison} do not necessarily hold. 
	This is mainly because incomplete preferences can lead to situations where outcomes are incomparable. 
	That is, none of the following statements is valid: 
	($1'$) If $v \parallel_{\calE} v'$ then $\rank_\calE(v) = \rank_\calE(v')$. 
	($2'$) If $v \not\succeq_{\calE} v'$ then $\rank_\calE(v) > \rank_\calE(v')$.
	($3'$) If $\rank_\calE(v) < \rank_\calE(v')$ then $v \succ_{\calE} v'$. 
	
	As an counterexample, consider a game with five states $\{v_1, \ldots, v_5\}$ where the state $v_1$ is strictly preferred to $v_2$, $v_3$ is strictly preferred to $v_4$, and $v_5$ is strictly preferred to $v_4$. 
	Following \refDef{def:rank}, the states $v_1, v_3, v_5$ have rank $0$ and the states $v_2, v_4$ have rank $1$. 
	Statement (a) is invalid because $v_2$ and $v_3$ are incomparable but they have different ranks. 
	Statement (c) is invalid because the rank of $v_3$ is smaller than that of $v_3$ but $v_2 \succ_{\calE} v_3$ does not hold since are $v_2$ and $v_3$ incomparable.
	To see the invalidity of statement (b), first note that $v \not\succeq_{\calE} v'$ holds when either $v' \succ_{\calE} v$ or $v \parallel_{\calE} v'$. 
	Now, observe that states $v_3$ and $v_5$ are incomparable but they have the same ranks.  
\end{remark}

In every two-player game with preferences, each player assigns a rank to every state. 
Let $\rank_1(v)$ and $\rank_2(v)$ denote the rank of state $v$ under $\calE_1$ and $\calE_2$, respectively. 
Additionally, let $k_1^{\max} \triangleq \max \{\rank_1(v) \mid v \in V\}$ and $k_2^{\max} \triangleq \max \{\rank_2(v) \mid v \in V\}$ denote the largest rank assigned to any state in $V$ under $\calE_1$ and $\calE_2$.


 
The following proposition shows that the game $H$, where player preferences are completely opposite, can be represented as a constant-sum game in which each player's payoff is the rank they assign to the last state.

\begin{proposition}
	\label{prop:opposite.constant-sum}
	When player preferences are completely opposite, for every $v \in V$, the sum of ranks assigned to $v$ under $\calE_1$ and $\calE_2$ is constant, \ie, $\rank_1(v) + \rank_2(v) = k_1^{\max} = k_2^{\max}$.
\end{proposition}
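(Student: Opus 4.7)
The plan is to prove the identity by exhibiting a duality between the rank stratifications produced by $\calE_1$ and $\calE_2$. Specifically, I would show that the level sets are mirror images, i.e., $Z_k^1 = Z_{K-k}^2$ where $K = k_1^{\max}$ and $Z_k^i$ denotes the $k$-th set stripped off by \refDef{def:rank} under $\calE_i$. From this, the constant-sum equality $\rank_1(v) + \rank_2(v) = K = k_1^{\max} = k_2^{\max}$ follows directly.

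First, I would unpack the ``completely opposite'' condition into a symmetric biconditional: by applying the defining implication both to the pair $(v, v')$ and to $(v', v)$, we obtain $v \succeq_{\calE_1} v' \iff v' \succeq_{\calE_2} v$. Combining this with the preceding proposition (which preserves incomparability), I would then deduce the three dualities $v \succ_{\calE_1} v' \iff v' \succ_{\calE_2} v$, $v \sim_{\calE_1} v' \iff v \sim_{\calE_2} v'$, and $v \parallel_{\calE_1} v' \iff v \parallel_{\calE_2} v'$. These are the atomic tools needed for the induction.

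Next, I would prove $Z_k^1 = Z_{K-k}^2$ by induction on $k$. In the base case, I characterize $Z_0^1 = \maximal(V, \calE_1)$ as exactly those states $v$ such that no $v'$ satisfies $v' \succ_{\calE_1} v$; by the strict-preference duality, this is the set of states $v$ such that no $v'$ satisfies $v \succ_{\calE_2} v'$, which under the stripping process for $\calE_2$ is exactly the final residue $Z_K^2$. For the inductive step, I assume $Z_j^1 = Z_{K-j}^2$ for $j < k$ and use it to rewrite the residual set $V \setminus \bigcup_{j<k} Z_j^1$ as $V \setminus \bigcup_{j>K-k} Z_{j}^2 = \bigcup_{j \le K-k} Z_j^2$. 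Taking maxima of this set under $\calE_1$ then yields, by the same duality as in the base case but applied within the residual, precisely $Z_{K-k}^2$. The termination of both peeling processes at the same iteration count then forces $k_1^{\max} = k_2^{\max}$.

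The main obstacle will be the base case, specifically verifying that \emph{maximal under $\calE_1$} coincides with the \emph{final residue under $\calE_2$}, rather than merely with \emph{minimal under $\calE_2$}. These two notions agree for totally comparable states but can diverge in the presence of incomparabilities, since an $\calE_2$-minimal state could be stripped at any level if it is incomparable to many others. I expect the argument to go through by carefully leveraging the incomparability-preservation proposition to show that the ``shape'' of the $\calE_1$ Hasse structure is literally flipped under $\calE_2$, so that the level of a state measured from the top of $\calE_1$ equals its level measured from the top of $\calE_2$ starting from the opposite end. This symmetry, once formalized through the inductive invariant above, will yield the constant-sum identity.
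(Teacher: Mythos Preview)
Your approach mirrors the paper's: both establish a duality between the rank stratifications $\{Z_k^1\}$ and $\{Z_k^2\}$ and argue level by level, the paper starting from the bottom of $\calE_1$ (claiming the $\calE_1$-minimal states coincide with $Z^1_{k_1^{\max}}$, hence with $Z_0^2$) and you starting from the top (claiming $Z_0^1 = Z_K^2$). The preparatory step where you upgrade the defining implication to a biconditional is a useful clarification that the paper uses implicitly.

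The obstacle you flagged is real, and the incomparability-preservation proposition will not rescue it: the statement as written is false in the presence of incomparabilities. Take $V = \{a, b, c\}$ with $a \succ_{\calE_1} b$ and $c$ incomparable to both; under the opposite relation $b \succ_{\calE_2} a$ with $c$ again incomparable to both. Then $Z_0^1 = \{a, c\}$, $Z_1^1 = \{b\}$, and symmetrically $Z_0^2 = \{b, c\}$, $Z_1^2 = \{a\}$, so $k_1^{\max} = k_2^{\max} = 1$, yet $\rank_1(c) + \rank_2(c) = 0 + 0 \neq 1$. An element sitting on a short incomparable branch is stripped early under \emph{both} orderings, so its two ranks cannot sum to the global depth. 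The paper's proof commits exactly the error you anticipated: it asserts that every $\calE_1$-minimal state has rank $k_1^{\max}$, which fails for $c$ here. The constant-sum identity requires an additional hypothesis---for instance that the preorder is graded (all maximal chains have equal length)---and without it no inductive argument of this shape can go through, so your hand-wave at the end cannot be completed.
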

\begin{proof}
	We will show that for any set $U \subseteq V$, a maximal state in $U$ under $\calE_1$ is the minimal element in $U$ under $\calE_2$ (recall that a state $v \in V$ is minimal under $\calE_1$ if there is no state $u \in V$ such that $v \succeq_{\calE_1} u$).
	
	First, we note that the minimal states in $V$ under $\calE_1$ are all included in the set $Z_{k_1^{\max}}$.
	This is because $k_1^{\max}$ is the maximum rank assigned to any state in $V$ and if there were a state $u \in V$ which was minimal but not included in $Z_{k_1^{\max}}$, then it must have a rank greater than ${k_1^{\max}}$, by \refProp{prop:rank-comparison}.
	
	Now, consider a state $v \in Z_{k_1^{\max}}$.
	We will show that $v$ is a maximal state under $\calE_2$, \ie, $\rank_2(v) = 0$.
	For this, we observe that every state $u \in Z_j$ for any $j < {k_1^{\max}}$ satisfies $u \succeq_{\calE_1} v$ or $u \parallel_{\calE_1} v$.
	Thus, under the opposite preference relation $\calE_2$ it must satisfy $v \succeq_{\calE_2} u$ or $v \parallel_{\calE_2} u$.
	Since $v$ was a minimal element in $V$ under $\calE_2$, there is no $u$ such that $u \succeq_{\calE_2} v$.
	In other words, $v$ is a maximal element in $V$ under $\calE_2$.
	By definition, $\rank_2(v) = 0$.
	
	It follows that every $v$ such that $\rank_1(v) = {k_1^{\max}}$ has a rank $0$ under $\calE_2$.
	For $j = 0, 1, \ldots$, let $Y_j$, , denote the set of states with rank $j$ under $\calE_2$.
	Using a similar argument, the minimal elements of $\bigcup\limits_{j=0}^k Z_j$ are the maximal elements of the set $V \setminus \bigcup\limits_{j=0}^k Y_j$.
	Therefore, every state in $Y_j$ is a state with rank ${k_1^{\max}} - j$ under $\calE_2$.
	
	It follows that $\rank_1(v) + \rank_2(v) = {k_1^{\max}}$.
\end{proof}

A key insight from \refProp{prop:opposite.constant-sum} is that P1 must play a strategy aimed at minimizing the rank under $\calE_1$ to achieve the best possible outcome. 
With this insight, we formalize the concept of maximal sure winning strategy.





First, we define a notation. 
Given a strategy profile $(\pi_1, \pi_2)$, we write $\rank_i(s_0, \pi_1, \pi_2)$ to denote the rank of the last state of the path $\Paths_H(s_0, \pi_1, \pi_2)$ under $\calE_i$.
We denote the maximum rank achievable under $\calE_i$ when P1 follows strategy $\pi_1$ by $\maxRank_i(\pi_1) \coloneqq \max \{\rank_i(\pi_1, \pi_2) \mid \pi_2 \in \Pi_2\}$.
The minimum rank achievable for P1 under $\calE_i$ by following $\pi_1$ is defined similarly.

\begin{definition}
	\label{def:maximal-swin}
	A strategy $\pi_1 \in \Pi_1$ in $H$ is said to be \emph{maximal sure winning} for P1 if there does not exist a P1 strategy $\pi_1' \in \Pi_1$ such that $\maxRank_1(\pi_1') < \maxRank_1(\pi_1)$.
\end{definition}

The maximal sure winning strategy for P2 is defined analogously.


Intuitively, \refDef{def:maximal-swin} indicates that a P1's maximal sure winning strategy minimizes the maximum rank achieved under any P1 strategy, regardless of P2's strategy.
However, this does not mean that P1's maximal sure winning strategy is undominated in $\Pi_1$.
Specifically, if $\pi_1$ is a P1's maximal sure winning strategy then, given a P2 strategy $\pi_2$, there may exist another P1 strategy $\pi_1'$ such that $\rank_1(\pi_1', \pi_2) < \rank_1(\pi_1, \pi_2) \leq \min\limits_{\pi_1 \in \Pi_1} \maxRank_1(\pi_1)$.

%

But would a rational P2 play a strategy $\pi_2$ such that $\rank_1(\pi_1, \pi_2) < \maxRank_1(\pi_1)$, where $\pi_1$ is a P1's maximal sure winning strategy? 
The following proposition answers this question.

\begin{proposition}
	\label{prop:opposite.p2-max-swin}
	Let $\pi_1$ be a maximal sure winning strategy.
	If $\maxRank_1(\pi_1) = k$, then every maximal sure winning strategy $\pi_2$ of P2 satisfies $\maxRank_2(\pi_2) = k_2^{\max} - k$.
\end{proposition}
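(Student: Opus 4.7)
The plan is to leverage the constant-sum structure from \refProp{prop:opposite.constant-sum} to convert P2's ranking objective into P1's, and then invoke determinacy of finite turn-based games to swap a min-max with a max-min. Let $K := k_1^{\max} = k_2^{\max}$. Since for every $v \in V$ we have $\rank_1(v) + \rank_2(v) = K$, it follows for any strategy profile $(\pi_1,\pi_2)$ that $\rank_2(\pi_1,\pi_2) = K - \rank_1(\pi_1,\pi_2)$, because both sides depend only on the (unique) last state of $\Paths_H(v_0,\pi_1,\pi_2)$.

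The first step will be to rewrite $\maxRank_2(\pi_2) = \max_{\pi_1} \rank_2(\pi_1,\pi_2) = K - \min_{\pi_1} \rank_1(\pi_1,\pi_2)$, so that minimising $\maxRank_2$ over $\pi_2$ is equivalent to maximising $\min_{\pi_1} \rank_1(\pi_1,\pi_2)$ over $\pi_2$. Thus $\min_{\pi_2}\maxRank_2(\pi_2) = K - \max_{\pi_2}\min_{\pi_1}\rank_1(\pi_1,\pi_2)$. On the other hand, by \refDef{def:maximal-swin} the hypothesis that $\pi_1$ is maximal sure winning with $\maxRank_1(\pi_1)=k$ says exactly that $\min_{\pi_1}\max_{\pi_2}\rank_1(\pi_1,\pi_2) = k$. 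So the entire claim reduces to verifying the minimax identity $\min_{\pi_1}\max_{\pi_2}\rank_1 = \max_{\pi_2}\min_{\pi_1}\rank_1$.

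The main obstacle is justifying this minimax identity, since in general min-max dominates max-min. I plan to exploit two structural facts: the game $H$ is a deterministic two-player \emph{turn-based} finite graph game, and all admissible strategies are proper, which forces every play to terminate after finitely many steps. Together these reduce the rank-game to a finite-horizon extensive-form game with integer payoffs in $\{0,\dots,K\}$ on a terminating DAG of histories; such games are determined in pure strategies by standard backward induction (equivalently, one obtains the value as a fixed point over $H$). Hence $\min_{\pi_1}\max_{\pi_2}\rank_1(\pi_1,\pi_2) = \max_{\pi_2}\min_{\pi_1}\rank_1(\pi_1,\pi_2) = k$, and substituting yields $\min_{\pi_2}\maxRank_2(\pi_2) = K - k = k_2^{\max} - k$.

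Finally, I will conclude by noting that since a maximal sure winning strategy for P2 is by definition one attaining the minimum $\min_{\pi_2}\maxRank_2(\pi_2)$, every such $\pi_2$ must satisfy $\maxRank_2(\pi_2) = k_2^{\max} - k$, which is exactly the claim. The only subtlety to be careful about is that determinacy gives the \emph{equality of values}, not uniqueness of optimal strategies; this is fine because the statement only asserts that all maximal sure winning strategies share the same $\maxRank_2$, not that they coincide.
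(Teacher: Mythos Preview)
Your proof is correct and, in fact, more complete than the paper's own argument. The paper argues only from P2's best response to the \emph{fixed} strategy $\pi_1$: it observes that some $\pi_2^*$ attains $\rank_1(\pi_1,\pi_2^*)=k$, hence $\rank_2(\pi_1,\pi_2^*)=k_2^{\max}-k$, and that this is the smallest $\calE_2$-rank P2 can obtain against $\pi_1$. That reasoning delivers the lower bound $\maxRank_2(\pi_2)\ge k_2^{\max}-k$ for every $\pi_2$ (since $\maxRank_2(\pi_2)\ge\rank_2(\pi_1,\pi_2)\ge k_2^{\max}-k$), but it does not by itself show the matching upper bound, namely that some $\pi_2$ guarantees $\rank_2(\pi_1',\pi_2)\le k_2^{\max}-k$ against \emph{all} $\pi_1'$. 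You close exactly this gap by recasting the question as the minimax identity $\min_{\pi_1}\max_{\pi_2}\rank_1=\max_{\pi_2}\min_{\pi_1}\rank_1$ and invoking determinacy of finite, terminating, turn-based games; the properness assumption on strategies and finiteness of $H$ make your backward-induction justification sound. What your route buys is a clean two-sided bound in one stroke, and it also makes transparent why the conclusion is about the common optimal \emph{value} $k_2^{\max}-k$ rather than about any particular optimal strategy---a point you rightly flag at the end.
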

\begin{proof}
	To answer this question, we observe that whenever $\maxRank_1(\pi_1) = k$, there exists a P2 strategy $\pi_2^*$ such that $\rank_1(\pi_1, \pi_2^*) = k$. 
	By \refProp{prop:opposite.constant-sum}, P2 must follow $\pi_2^*$ since $\rank_1(\pi_1, \pi_2^*)$ is the maximum possible rank under $\calE_1$, or equivalently, the smallest possible rank under $\calE_2$ possible when P1 plays its maximal sure winning strategy. 
\end{proof}

\refProp{prop:opposite.p2-max-swin} establishes that when P1 plays its maximal sure winning strategy, a rational P2 must also play its maximal sure winning strategy. 
And when both players play their maximal sure winning strategies, neither P1 cannot achieve a smaller rank than $\maxRank_1(\pi_1)$ nor P2 can achieve a smaller rank than $\maxRank_2(\pi_2)$.

The following theorem encodes a procedure to compute maximal sure winning strategy for a player. 




\begin{theorem}
	\label{thm:max-swin}
	Let $\pi_1$ be a sure winning strategy to visit the set $Y_k = \{v \in V \mid \rank(v) \leq k\}$ where $k$ is the smallest integer such that $v_0 \in \swin_1(Y_k)$.
	Then, $\pi_1$ is a maximal sure winning strategy for P1.
\end{theorem}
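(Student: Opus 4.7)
The plan is to verify the two requirements of \refDef{def:maximal-swin} by showing (i) $\maxRank_1(\pi_1)=k$ and (ii) that no P1 strategy achieves a strictly smaller maximum rank under $\calE_1$. The argument relies on the standard reachability semantics of sure-winning, the properness of strategies (so every play is finite and has a well-defined last state), and on the nesting $Y_0 \subseteq Y_1 \subseteq \cdots$ that is immediate from \refDef{def:rank}. Before executing the two steps I would record, as a brief preliminary, the equivalence between ``sure winning to visit $Y_k$'' in the sense of \cite{de2007concurrent} and ``the terminating play ends in $Y_k$'' in the sense demanded by the rank-based comparison; this is where the interface between the two notations meets, and spelling it out once keeps the rest of the argument clean.

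For part (i), the upper bound $\maxRank_1(\pi_1) \le k$ is immediate: because $\pi_1$ is sure-winning for $Y_k$, every terminating play consistent with $\pi_1$ against any $\pi_2 \in \Pi_2$ ends in $Y_k$, and hence its last state has rank at most $k$ under $\calE_1$. For the matching lower bound, I would argue by contradiction. If $\maxRank_1(\pi_1) \le k-1$, then every play under $\pi_1$ would in fact terminate in $Y_{k-1}$, so $\pi_1$ would itself be a sure-winning strategy for $Y_{k-1}$, placing $v_0 \in \swin_1(Y_{k-1})$ and contradicting the minimality of $k$. Therefore $\maxRank_1(\pi_1) = k$.

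For part (ii), suppose for contradiction that there exists $\pi_1' \in \Pi_1$ with $\maxRank_1(\pi_1') = k' < k$. Then every play induced by $\pi_1'$ against any $\pi_2 \in \Pi_2$ terminates at a state of rank at most $k'$, i.e., in $Y_{k'}$. Hence $\pi_1'$ is itself sure-winning for $Y_{k'}$, which yields $v_0 \in \swin_1(Y_{k'})$ with $k' < k$, once again contradicting the minimality of $k$. Combining (i) and (ii) gives that $\pi_1$ satisfies \refDef{def:maximal-swin}. The main obstacle I anticipate is not deep but notational: one must be careful that the reachability objective $Y_k$ and the rank-of-last-state used by $\maxRank_1$ are genuinely interchangeable for proper strategies, so that a hypothetical ``better'' strategy $\pi_1'$ can be upgraded to a sure-winning witness for a smaller target. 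Once that bridge is in place, both directions are just applications of the minimality of $k$.
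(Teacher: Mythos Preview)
Your proposal is correct and follows essentially the same route as the paper's proof: both derive a contradiction with the minimality of $k$ by observing that any strategy $\pi_1'$ with $\maxRank_1(\pi_1') < k$ would itself be sure-winning for the smaller target $Y_{\maxRank_1(\pi_1')}$, forcing $v_0 \in \swin_1(Y_{k'})$ for some $k' < k$. Your version is slightly more explicit than the paper's---you separately verify $\maxRank_1(\pi_1)=k$ in step~(i), whereas the paper silently identifies $\maxRank_1(\pi_1)$ with the theorem's $k$ and proceeds directly to the contradiction---but the underlying argument is the same.
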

\begin{proof}
	By contradiction. 
	Suppose that $\pi_1$ is not a maximal sure winning strategy for P1. 
	According to \refDef{def:maximal-swin}, there must exist another strategy $\pi_1'$ for P1 such that $\maxRank_1(\pi_1') < \maxRank_1(\pi_1)$. 
	Let $\maxRank_1(\pi_1') = m$ and $\maxRank_1(\pi_1) = k$. 
	Clearly, $m < k$.
	
	Now, we observe two facts. 
	First, when P1 follows $\pi_1'$, the game reaches a terminal state within the set $Y = \{v \in V \mid \exists \pi_2 \in \Pi_2: v \text{ is the last state of } \Paths_H(v_0, \pi_1', \pi_2)\}$. 
	Second, $Y$ is a subset of $Y_m = \{v \in V \mid \rank_1(v) \leq m\}$ because $\maxRank_1(\pi_1') = m$.
	
	Consequently, when P1 follows $\pi_1'$, it is guaranteed to reach a state in $Y_m$ regardless of the strategy employed by P2. This means $\pi_1'$ is a sure winning strategy for P1 to reach $Y_m$, which contradicts the assumption that $k$ is the smallest integer such that $v_0 \in \swin_1(Y_k)$.
	Therefore, $\pi_1$ must be a maximal sure winning strategy for P1.
\end{proof}
The following corollary to \refThm{thm:max-swin} establishes the existence of a maximal sure winning strategy for both players in every game.

\begin{corollary}
	In every game $H$, both players have a maximal sure winning strategy.
\end{corollary}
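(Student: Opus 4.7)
The plan is to invoke \refThm{thm:max-swin} and argue that its hypothesis is always satisfiable, so the construction it describes always produces a strategy. Specifically, for P1 it suffices to show that the set $K_1 = \{k \geq 0 \mid v_0 \in \swin_1(Y_k)\}$ is non-empty; by the well-ordering of $\mathbb{N}$, it then has a least element $k^\ast$, and \refThm{thm:max-swin} guarantees that any P1 sure winning strategy to $Y_{k^\ast}$ is a maximal sure winning strategy for P1. The symmetric argument gives the result for P2.

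The crux, therefore, is to observe that $Y_{k_1^{\max}} = V$. This follows directly from \refDef{def:rank}: by construction, $V = Z_0 \cup Z_1 \cup \cdots \cup Z_{k_1^{\max}}$, and every state $v \in V$ satisfies $\rank_1(v) \leq k_1^{\max}$, so every state belongs to $Y_{k_1^{\max}} = \{v \in V \mid \rank_1(v) \leq k_1^{\max}\}$. Since we assumed all strategies to be proper, any P1 strategy trivially drives the game to a terminal state inside $V = Y_{k_1^{\max}}$, and therefore $v_0 \in \swin_1(Y_{k_1^{\max}})$. Hence $k_1^{\max} \in K_1$, so $K_1 \neq \emptyset$.

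Applying \refThm{thm:max-swin} to the minimum element $k^\ast$ of $K_1$ yields a maximal sure winning strategy for P1. Replacing $\calE_1$ with $\calE_2$, $\rank_1$ with $\rank_2$, and $\swin_1$ with the analogous operator for P2 produces the same conclusion for P2 by a wholly symmetric argument.

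I do not anticipate a serious obstacle: the only subtlety is making explicit that $K_1$ is non-empty, which is immediate from the exhaustiveness of the rank partition guaranteed by \refDef{def:rank} together with the assumption that all strategies are proper. The heavy lifting has already been done in \refThm{thm:max-swin}, so the corollary reduces to this short existence argument.
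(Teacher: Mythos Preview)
Your proof is correct and follows essentially the same approach as the paper: both reduce the corollary to showing that the set $K_1 = \{k \geq 0 \mid v_0 \in \swin_1(Y_k)\}$ is non-empty and then invoke \refThm{thm:max-swin}. The only difference is the witness chosen for non-emptiness: you take $k = k_1^{\max}$ (so $Y_k = V$) and appeal to properness, whereas the paper takes the slightly sharper $k = \rank_1(v_0)$ and uses the standard fact that $v_0 \in Y_k$ already implies $v_0 \in \swin_1(Y_k)$.
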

\begin{proof}
	For a player, say P1, to have a maximal sure winning strategy, there must exist some $k \geq 0$ such that $v_0 \in \swin_1(Y_k)$, where $Y_k = \{v \in V \mid \rank_1(v) \leq k\}$. 
	This condition is always satisfied for $k = \rank_1(v_0)$. 
	In this case, we have $v_0 \in Y_k$. 
	The statement follows from the fact that, for all sure winning regions, $v_0 \in Y_k$ implies that $v \in \swin_1(Y_k)$ \cite{de2007concurrent}.
\end{proof}


\refThm{thm:max-swin} yields a procedure to compute the maximal sure winning region for a player. 
The procedure iteratively computes the sure winning regions $Y_0, Y_1, \ldots, Y_j$ until $v_0$ is included in $\swin_1(Y_j)$, for $j = 0, \ldots, k_1^{\max}$. 
Assuming $k$ is the smallest integer for which $v_0 \in Y_k$, \refThm{thm:max-swin} states that the sure winning strategy to visit $Y_k$ is the maximal sure winning strategy for the player.


%

Furthermore, the procedure to compute the maximal sure winning strategy scales linearly with the size of $H$ and the maximum rank of any state in the game, $k_1^{\max}$. 
This is because the procedure to compute $\swin_1(Y_j)$, which scales linearly with size of $H$ \cite{zielonka1998infinite}, is called at most $k_1^{\max}$-many times when solving for the maximal sure winning strategy.

 



Finally, we show that every Nash equilibrium in $H$ consists of P1 and P2's maximal sure winning strategies.

%


\begin{theorem}
	\label{thm:opposite.nash-equilibrium}
	Every strategy profile $(\pi_1, \pi_2)$ such that $\pi_1$ and $\pi_2$ are P1 and P2's maximal sure winning strategies is a Nash equilibrium in $H$.
\end{theorem}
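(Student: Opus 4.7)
The plan is to unpack the Nash equilibrium definition: a strategy profile $(\pi_1,\pi_2)$ of maximal sure winning strategies is a Nash equilibrium if neither player has a unilateral deviation leading to a strictly preferred terminal state. I will argue by contradiction for each player, using the constant-sum property from \refProp{prop:opposite.constant-sum} as the bridge between the two players' ranks.

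First I would fix notation. Let $k := \maxRank_1(\pi_1)$ and note that, by \refProp{prop:opposite.p2-max-swin}, $\maxRank_2(\pi_2) = k_2^{\max} - k$. Let $v^* := \last(\Paths_H(v_0,\pi_1,\pi_2))$. Since $\pi_1$ is maximal sure winning for P1, $\rank_1(v^*) \leq k$; since $\pi_2$ is maximal sure winning for P2, $\rank_2(v^*) \leq k_2^{\max} - k$. Combining these two inequalities with the constant-sum identity $\rank_1(v^*) + \rank_2(v^*) = k_2^{\max}$ forces both inequalities to be tight, so $\rank_1(v^*) = k$ and $\rank_2(v^*) = k_2^{\max} - k$. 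This pins down the payoff at the proposed equilibrium.

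Next, suppose for contradiction that P1 has a profitable deviation $\pi_1' \in \Pi_1$, meaning the terminal state $v' := \last(\Paths_H(v_0,\pi_1',\pi_2))$ satisfies $v' \succ_{\calE_1} v^*$. By \refProp{prop:rank-comparison}(3), this forces $\rank_1(v') < \rank_1(v^*) = k$. On the other hand, because $\pi_2$ is maximal sure winning for P2 with $\maxRank_2(\pi_2) = k_2^{\max} - k$, the state $v'$ reached against $\pi_2$ by \emph{any} P1 strategy satisfies $\rank_2(v') \leq k_2^{\max} - k$, so by the constant-sum identity $\rank_1(v') \geq k$. This contradicts $\rank_1(v') < k$, so no such $\pi_1'$ exists. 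The symmetric argument, swapping the roles of the players and invoking $\maxRank_1(\pi_1) = k$ to bound $\rank_1$ above on any P2 deviation, rules out a profitable P2 deviation. Hence $(\pi_1,\pi_2)$ satisfies both conditions of \refDef{def:nash} and is a Nash equilibrium.

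The main obstacle is ensuring the constant-sum identity is applied correctly to the deviation outcome $v'$, which in general differs from $v^*$; this is what lets us convert an upper bound on P2's rank at $v'$ into the required lower bound on P1's rank at $v'$ and close the contradiction. Once that bridge is in place, the argument reduces to a clean rank comparison via \refProp{prop:rank-comparison}, and no further combinatorial work on the graph structure of $H$ is needed.
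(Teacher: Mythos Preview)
Your proposal is correct and follows essentially the same approach as the paper's $(\impliedby)$ direction: both arguments pin down $\rank_1(v^*) = k$ via the constant-sum identity of \refProp{prop:opposite.constant-sum} combined with \refProp{prop:opposite.p2-max-swin}, and then rule out a profitable P1 deviation by observing that $\maxRank_2(\pi_2) = k_2^{\max} - k$ upper-bounds $\rank_2(v')$ on any deviation, which the constant-sum identity converts into the lower bound $\rank_1(v') \geq k$. Your write-up is simply more explicit than the paper's about this last step and about invoking \refProp{prop:rank-comparison}(3) to pass from $v' \succ_{\calE_1} v^*$ to $\rank_1(v') < k$; the paper also proves the converse (every Nash equilibrium consists of maximal sure winning strategies), which the theorem statement as written does not require.
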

\begin{proof}
	($\implies$). By contradiction. 
	Let $(\pi_1, \pi_2)$ be a Nash equilibrium in $H$. 
	Without loss of generality, suppose that $\pi_1$ is not a maximal sure winning strategy. 
	Since $\pi_1$ is not a maximal sure winning, there must exist a P1 strategy $\pi_1'$ such that $\maxRank_1(\pi_1') < \maxRank_1(\pi_1)$. 
	But this implies that $\rank_1(\pi_1', \pi_2) < \rank_1(\pi_1, \pi_2)$. 
	Since this violates the condition for $(\pi_1, \pi_2)$ to be a Nash equilibrium (see \refDef{def:nash}), it must be the case that $\pi_1$ is a maximal sure winning strategy for P1.	
	Using a similar argument, $\pi_2$ also must be a maximal sure winning strategy for P2.
	
	($\impliedby$). Let $\pi_1$ and $\pi_2$ be a maximal sure winning strategies of P1 and P2 in $H$.  
	Then, by \refProp{prop:opposite.constant-sum} and \refProp{prop:opposite.p2-max-swin}, we have that $\rank_1(\pi_1, \pi_2) = \maxRank_1(\pi_1) = k_2^{\max} - \maxRank_2(\pi_2)$.
	That is, there is no P1 strategy $\pi_1'$ such that $\rank_1(\pi_1', \pi_2) < \rank_1(\pi_1, \pi_2)$ and there is no P2 strategy $\pi_2'$ such that $\rank_2(\pi_1, \pi_2') < \rank_2(\pi_1, \pi_2)$. 
	Since this condition satisfies the two conditions in \refDef{def:nash}, $(\pi_1, \pi_2)$ is a Nash equilibrium in $H$.
\end{proof}
	

A key insight from \refThm{thm:opposite.nash-equilibrium} is that even though \refDef{def:maximal-swin} employs a weaker form of undominance to define maximal sure winning concept, it is sufficient to characterize the Nash equilibria when player preferences are completely opposite.


\subsection{Partial Aligned Preferences} 

The player preferences are said to be partially aligned if they are neither fully aligned nor completely opposite.
In this scenario, players might have an incentive to cooperate if their preferences align. 
However, they might also be motivated to compete if a better outcome for one player results in a worse outcome for the other. 
Therefore, to characterize the Nash equilibrium, it is important to determine when a player needs cooperation and when a player has an incentive to cooperate.

We say that a player needs cooperation if the best outcome achievable through cooperation is strictly preferred to the best outcome it can guarantee without cooperation. 

\begin{definition}
\label{def:need-of-cooperation}
	Let $\pi_i$ be a maximal sure winning strategy of player-$i$.
	We say a player-$i$ needs cooperation by other player if the following equation holds, 
	\begin{align}
		\maxRank_i(\pi_i) > \min \{\rank_i(\pi_1, \pi_2) \mid \pi_1 \in \Pi_1, \pi_2 \in \Pi_2\}
	\end{align}
\end{definition}

Even if one player needs cooperation, the other player may or may not have an incentive to cooperate.
Drawing from \cite{morschheuser2017games}, we identify two motivations for a player to cooperate. 
Instrumental cooperation refers to the case when the cooperating player benefits from the collaboration, achieving a better outcome than they would without it. 
Attitudinal cooperation, on the other hand, describes a scenario where the cooperating player gains no direct benefit but chooses to cooperate based on their altruistic tendencies. 
It is important to note that when a player engages in attitudinal cooperation, they do so only if restricting their strategy does not worsen their own outcome.

We now characterize the set of Nash equilibria based on the number of players who need cooperation. 
Without loss of generality, we assume that the state with the smallest rank reachable from the initial state in $H$ has a rank of $0$ for both players, \ie, $\min \{\rank_i(\pi_1, \pi_2) \mid \pi_1 \in \Pi_1, \pi_2 \in \Pi_2\} = 0$ for $i = 1, 2$.

\textbf{No player needs cooperation.}
In this case, both players have a maximal sure winning strategy that achieves a rank $0$ outcome regardless of the strategy used by their opponent. 
Hence, we have the following result.

\begin{theorem}
	\label{thm:partial.no-cooperation}
	Let $Y = \{v \in V \mid \rank_1(v) = \rank_2(v) = 0\}$.
	A strategy profile $(\pi_1, \pi_2)$ is a Nash equilibrium in $H$ if and only if the last state of the path $\Paths_H(v_0, \pi_1, \pi_2)$ is an element of the set $Y$.
\end{theorem}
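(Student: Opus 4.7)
The plan is to prove both directions of the iff directly, using the characterization of $Y$ as the set of jointly rank-zero states together with the hypothesis that no player needs cooperation. By \refDef{def:need-of-cooperation} and \refThm{thm:max-swin}, combined with the normalization $\min\{\rank_i(\pi_1',\pi_2') \mid \pi_1' \in \Pi_1, \pi_2' \in \Pi_2\} = 0$, this hypothesis yields maximal sure winning strategies $\pi_1^*, \pi_2^*$ with $\maxRank_i(\pi_i^*) = 0$ for $i = 1, 2$. Intuitively, $Y$ collects exactly those outcomes that realize this best attainable rank for both players simultaneously.

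For the ($\impliedby$) direction, assume the last state $v$ of $\Paths_H(v_0, \pi_1, \pi_2)$ lies in $Y$, so $\rank_1(v) = \rank_2(v) = 0$. By \refDef{def:rank}, $v$ is then a maximal element of $V$ under both $\calE_1$ and $\calE_2$, so no state in $V$ is strictly preferred to $v$ under either preorder. Any unilateral deviation by player $i$ leads to some outcome $v'$ for which $v' \succ_{\calE_i} v$ cannot hold, hence both conditions of \refDef{def:nash} are satisfied.

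For the ($\implies$) direction, I would argue by contrapositive. Suppose $v \notin Y$; without loss of generality $\rank_1(v) > 0$. Consider the unilateral deviation $(\pi_1^*, \pi_2)$. Since $\maxRank_1(\pi_1^*) = 0$, the resulting outcome $v^*$ satisfies $\rank_1(v^*) = 0 < \rank_1(v)$. Following the rank-based reading of the Nash condition used in the proof of \refThm{thm:opposite.nash-equilibrium}, any such rank-reducing unilateral deviation constitutes a Nash violation, so $(\pi_1, \pi_2)$ cannot be a Nash equilibrium. The case $\rank_2(v) > 0$ is handled symmetrically using $\pi_2^*$.

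The main obstacle is reconciling the rank inequality with strict preference: by \refProp{prop:rank-comparison}, strict preference implies strictly smaller rank but the converse can fail, so $v^*$ and $v$ might merely be incomparable under $\calE_1$ even though $\rank_1(v^*) < \rank_1(v)$. The ($\implies$) step is closed by adopting the same rank-based interpretation of Nash equilibrium that the paper uses in its earlier treatments of completely opposite preferences, where a rank-reducing unilateral deviation is counted as a violation; this ensures that P1's deviation to $\pi_1^*$ forbids $(\pi_1, \pi_2)$ from being Nash whenever $\rank_1(v) > 0$.
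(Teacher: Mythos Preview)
Your proof follows essentially the same approach as the paper's own argument: show that any profile terminating at a state of positive rank for some player admits a rank-reducing unilateral deviation via that player's maximal sure winning strategy, and that any profile terminating in $Y$ is undominated because rank-$0$ states are maximal. You are, if anything, more careful than the paper, which does not explicitly separate $\rank_1$ from $\rank_2$ in the contrapositive step and does not flag the rank-versus-strict-preference subtlety; your explicit acknowledgment that \refProp{prop:rank-comparison} only gives one implication, and your resolution by appeal to the rank-based reading of \refDef{def:nash} already adopted in the proof of \refThm{thm:opposite.nash-equilibrium}, matches exactly how the paper handles this throughout.
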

\begin{proof}
	The proof has two parts.  
	First, we show that any strategy profile that induces a path that terminates at a state with rank greater than $0$ cannot be a Nash equilibrium. 
	Then, we observe that a strategy profile $(\pi_1, \pi_2)$ for which the path $\Paths_H(v_0, \pi_1, \pi_2)$ terminates at a state in $Y$ is a Nash equilibrium.
	
	Consider a strategy profile $(\pi_1', \pi_2')$ such that the last state of the path $\Paths_H(v_0, \pi_1', \pi_2')$ has a rank greater than $0$. 
	Clearly, $\rank_1(\pi_1, \pi_2) < \rank_1(\pi_1', \pi_2')$ because the maximal sure winning strategy $\pi_1$ terminates at a state with rank $0$.
	Therefore, $(\pi_1', \pi_2')$ cannot be a Nash equilibrium. 
	
	The second statement is true because there is no strategy profile that can achieve a better rank than $0$. 
	This concludes the proof. 
\end{proof}
%

\textbf{Only one player needs cooperation.}
In this case, the player who does not need cooperation has a maximal sure winning strategy that ensures a rank $0$ outcome for the player.
The set of Nash equilibria is then determined by the attitude of the player who does not need cooperation.

Without loss of generality, let P1 be the player who needs cooperation.
\begin{theorem}
	\label{thm:attitude.cooperative}
	When P2 is attitudinally cooperative, a strategy profile $(\pi_1, \pi_2)$ is a Nash equilibrium if and only if $\pi_2$ is a maximal sure winning strategy for P2 and the last state of the path $\Paths_H(v_0, \pi_1, \pi_2)$ is an element of the set $Y$, where $Y = \arg\min\limits_{v \in V} \{\rank_1(v) \mid \rank_2(v) = 0\}$.
\end{theorem}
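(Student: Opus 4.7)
The proof proceeds by the two directions of the biconditional; both rest on the reading of attitudinal cooperation as restricting P2's effective strategy space to maximal sure winning strategies, which is the precise behavioral assumption one must extract from \refDef{def:need-of-cooperation} and the surrounding discussion.

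For the \emph{backward direction}, assume $\pi_2$ is maximal sure winning for P2 and that the terminal state $v^\star := \last(\Paths_H(v_0, \pi_1, \pi_2))$ lies in $Y$. I would verify both conditions of \refDef{def:nash}. For any P1 deviation $\pi_1' \in \Pi_1$, the terminal state $v'$ of the resulting path satisfies $\rank_2(v') = 0$, because $\pi_2$ is maximal sure winning. By the definition of $Y$ we have $\rank_1(v^\star) \le \rank_1(v')$, and the contrapositive of \refProp{prop:rank-comparison}(3) then yields $v' \not\succ_{\calE_1} v^\star$, so P1 cannot strictly improve. P2 already attains its reservation rank $0$, so its own outcome cannot improve, and attitudinal cooperation restricts P2 to deviations that preserve this reservation, i.e.\ other maximal sure winning strategies. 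Each such alternative $\pi_2'$ again lands in a state $v''$ with $\rank_2(v'') = 0$ and, by the definition of $Y$, $\rank_1(v'') \ge \rank_1(v^\star)$, so no deviation strictly improves either P2's own outcome or (through the altruistic component) P1's outcome.

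For the \emph{forward direction}, suppose $(\pi_1, \pi_2)$ is a Nash equilibrium. I would first show that $\pi_2$ must be maximal sure winning. Since P2 does not need cooperation, a strategy guaranteeing rank~$0$ against every P1 response exists and defines P2's reservation value; attitudinal cooperation forces P2 to restrict to strategies that preserve this value, so if $\pi_2$ failed to guarantee rank~$0$, P2 would unilaterally switch to a maximal sure winning strategy without worsening its realised outcome, contradicting the equilibrium. Next, I would show $v^\star \in Y$. Since $\pi_2$ is maximal sure winning, $\rank_2(v^\star) = 0$; if $v^\star \notin Y$, then there exists a state $v^\circ$ with $\rank_2(v^\circ) = 0$ and $\rank_1(v^\circ) < \rank_1(v^\star)$, reachable through some strategy pair whose P2 component is maximal sure winning. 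Attitudinal cooperation then produces a unilateral deviation---either P1 switches (if $v^\circ$ is reachable under the fixed $\pi_2$) or P2 coordinates by picking a different maximal sure winning strategy steering toward $v^\circ$---which strictly improves P1's outcome and contradicts the NE assumption.

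The \textbf{main obstacle} is converting the informal notion of attitudinal cooperation into a precise admissibility criterion for deviations, so that the second step of the forward direction is rigorous. Concretely, one must argue that the existence of a strictly better state $v^\circ$ always corresponds to a concrete Nash-style deviation; I would address this by fixing the convention that P2's effective strategy space is exactly the family of maximal sure winning strategies and that P2 coordinates within this family to minimise $\rank_1$ at the terminal state, after which the existence of a reachable $v^\circ$ with smaller $\rank_1$ directly supplies the required improvement and the desired contradiction.
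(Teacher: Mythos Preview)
Your proposal is correct and follows essentially the same approach as the paper: the backward direction checks both Nash conditions by using that $\pi_2$ guarantees $\rank_2=0$ and that $Y$ collects the $\rank_1$-minimal states among those, while the forward direction first forces $\pi_2$ to be maximal sure winning (the paper defers this to the argument of \refThm{thm:opposite.nash-equilibrium}) and then argues the terminal state must lie in $Y$ since attitudinal cooperation means P1 attains a maximal outcome within $\{v:\rank_2(v)=0\}$. Your explicit identification of the formalization issue around attitudinal cooperation is well taken---the paper treats it at the same informal level you do, simply stipulating that a cooperative P2 restricts to strategies securing its own rank~$0$ outcome while steering P1 to a $\rank_1$-minimal state in that set.
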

\begin{proof}
	($\implies$). Let $(\pi_1, \pi_2)$ be a Nash equilibrium. 
	The reason why $\pi_2$ must be a maximal sure winning strategy for P2 can be established using a similar argument to the proof of \refThm{thm:opposite.nash-equilibrium}. 
	Recall that an attitudinally cooperative P2 means that P1 achieves a maximal outcome with the set $W = \{v \in V \mid \rank_1(v) = 0\}$. 
	By \refProp{prop:rank-comparison}, the maximal elements in $W$ are those with the smallest rank, which is given by $Y$. 
	
	($\impliedby$). Let $\pi_2$ be maximal sure winning strategy of P2 and $\pi_1$ be a strategy such that $(\pi_1, \pi_2)$ induce a path that terminates in $Y$.
	Clearly, P2 has no strategy that achieves a better outcome than $\pi_2$ because $\pi_2$ ensures a rank $0$ outcome to P2.	
	Also, P1 has no strategy that achieves a better outcome than $\pi_1$ because $Y$ contains the maximal elements in $Z$.
	By \refDef{def:nash}, $(\pi_1, \pi_2)$ must be a Nash equilibrium.
\end{proof}

In words, when P2 has a cooperative attitude, the Nash equilibrium yields an outcome from the states with a rank of $0$ under $\calE_2$ that have least possible rank under $\calE_1$.

When P2 is not attitudinally cooperative, P1 must develop a strategy that achieves the best possible rank, assuming that P2 may use any of its maximal sure winning strategies. 
To this end, we define a sub-game of game $H$ as follows: $\widehat{H} = \langle V, A, \widehat{\Delta}, v_0, \calE_1, \calE_2\rangle,$ where $V, A, v_0, \calE_1, \calE_2$ have the same meanings as \refDef{def:product-game} and the transition function $\widehat{\Delta}$ is defined as follows: $\widehat{\Delta}(v, a) = \Delta(v, a)$ if either $v \in V_1$, or $v \in V_2$ and there exists a P2's maximal sure winning strategy $\pi_2$ such that $\pi_2(v) = a$. Otherwise, $\widehat{\Delta}(v, a)$ is undefined.

\begin{theorem}
	\label{thm:attitude.agnostic}
	When P2 is not attitudinally cooperative, a strategy profile $(\pi_1, \pi_2)$ is a Nash equilibrium in $H$ if and only if $\pi_1$ is a maximal sure winning strategy for P1 in $\widehat{H}$ and $\pi_2$ is any valid P2 strategy in $\widehat H$.
\end{theorem}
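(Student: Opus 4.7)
The plan is to follow the two-direction template used in \refThm{thm:opposite.nash-equilibrium} and \refThm{thm:attitude.cooperative}, but to conduct the equilibrium analysis inside the sub-game $\widehat{H}$, which by construction confines P2 to its maximal-sure-winning actions. The role of $\widehat{H}$ is to capture the joint effect of two observations: (i) since P2 does not need cooperation, P2 has a maximal sure winning strategy that secures rank $0$, and (ii) since P2 is not attitudinally cooperative, P2 will not restrict itself beyond being maximal sure winning in order to benefit P1. Together, these pin down the edges P1 can assume P2 might use.

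For the forward direction, assume $(\pi_1,\pi_2)$ is a Nash equilibrium in $H$. I will first argue that $\pi_2$ is legal in $\widehat{H}$ by adapting the contradiction in the proof of \refThm{thm:opposite.nash-equilibrium}: if $\pi_2$ were not maximal sure winning, then $\maxRank_2(\pi_2) > 0$, so there is a P2 deviation to a maximal sure winning strategy that strictly decreases $\rank_2$, violating \refDef{def:nash}. Hence $\pi_2$ is consistent with some maximal sure winning strategy of P2, i.e., it is a valid strategy in $\widehat{H}$. Next, I will show $\pi_1$ must be maximal sure winning for P1 in $\widehat{H}$. The key step is that because P2 may rationally adopt any maximal-sure-winning strategy, P1 effectively faces an adversarial choice over the edge set of $\widehat{H}$. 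If $\pi_1$ were not maximal sure winning in $\widehat{H}$, then by \refThm{thm:max-swin} applied to $\widehat{H}$ there exists $\pi_1'$ with $\maxRank_1^{\widehat{H}}(\pi_1') < \maxRank_1^{\widehat{H}}(\pi_1)$; the worst-case P2 strategy in $\widehat{H}$ witnessing the bound on $\pi_1$ is itself a strategy in $\widehat{H}$, so substituting $\pi_1'$ yields a strictly better outcome for P1 against $\pi_2$, contradicting the Nash property.

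For the converse, suppose $\pi_1$ is maximal sure winning in $\widehat{H}$ and $\pi_2$ is any valid P2 strategy in $\widehat{H}$. P2 already attains rank $0$, which by \refProp{prop:rank-comparison} cannot be strictly improved, so no P2 deviation is profitable. For P1, any deviation $\pi_1'$ is evaluated against the fixed $\pi_2 \in \widehat{H}$; since $\pi_1$ minimizes the worst-case $\rank_1$ over all P2 strategies in $\widehat{H}$ and $\pi_2$ lies in that class, $\pi_1'$ cannot yield a strictly preferred outcome, so \refDef{def:nash} is satisfied.

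The main obstacle is justifying that restricting attention to $\widehat{H}$ is both necessary and sufficient for the Nash characterization in the full game $H$. Necessity requires showing that the binding worst-case P2 response realizing $\maxRank_1^{\widehat{H}}(\pi_1)$ is indeed a plausible Nash play for a non-altruistic P2; sufficiency requires ruling out P1 deviations via strategies that exploit transitions outside $\widehat{H}$, which is handled by the fact that P2 never takes those transitions. Once this correspondence is established, the theorem reduces cleanly to applying \refThm{thm:max-swin} inside $\widehat{H}$.
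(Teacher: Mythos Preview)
Your overall plan matches the paper's own proof, which is simply the one-line remark that the argument is ``similar to that of \refThm{thm:opposite.nash-equilibrium} and \refThm{thm:attitude.cooperative} and thus omitted.'' So the template is right; the issue is in how you execute the two directions inside $\widehat{H}$.

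The real gap is in your converse step for P1. You argue: ``since $\pi_1$ minimizes the worst-case $\rank_1$ over all P2 strategies in $\widehat{H}$ and $\pi_2$ lies in that class, $\pi_1'$ cannot yield a strictly preferred outcome.'' But a minimax guarantee only gives $\rank_1(\pi_1,\pi_2') \le \maxRank_1^{\widehat H}(\pi_1)$ for every $\pi_2'$ in $\widehat H$; it does \emph{not} rule out a deviation $\pi_1'$ with $\rank_1(\pi_1',\pi_2) < \rank_1(\pi_1,\pi_2)$ against the particular $\pi_2$ in the profile. In \refThm{thm:opposite.nash-equilibrium} this step is rescued by \refProp{prop:opposite.p2-max-swin}: with completely opposite preferences, a maximal-sure-winning $\pi_2$ necessarily realizes $\rank_1(\pi_1,\pi_2) = \maxRank_1(\pi_1)$, so the minimax bound is tight. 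In $\widehat H$, by contrast, every valid $\pi_2$ already gives P2 rank~$0$, so P2 is indifferent among them and there is no structural reason the chosen $\pi_2$ must be P1's worst case. You need either an analogue of \refProp{prop:opposite.p2-max-swin} for $\widehat H$ (tying the ``non-attitudinally-cooperative'' assumption to P2 selecting, among its rank-$0$ strategies, one that is worst for P1), or a direct argument that no $\pi_1'$ improves against every such $\pi_2$.

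The same slippage appears in your forward direction. You invoke ``the worst-case P2 strategy in $\widehat{H}$ witnessing the bound on $\pi_1$'' and then conclude ``substituting $\pi_1'$ yields a strictly better outcome for P1 against $\pi_2$.'' But that witness is some $\pi_2^\ast$, not the fixed $\pi_2$ of the Nash profile; the inequality $\maxRank_1^{\widehat H}(\pi_1') < \maxRank_1^{\widehat H}(\pi_1)$ does not by itself give $\rank_1(\pi_1',\pi_2) < \rank_1(\pi_1,\pi_2)$. Again, you need the non-cooperative attitude of P2 to pin $\pi_2$ to a worst-case choice before the $\maxRank$ comparison transfers to the specific profile.
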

%
%

The proof is similar to that of \refThm{thm:opposite.nash-equilibrium} and \refThm{thm:attitude.cooperative} and thus omitted.

\textbf{Both players need cooperation.} 
In this case, we characterize the Nash equilibrium in terms of the Pareto equilibrium. 
Intuitively, a Pareto equilibrium is a set of strategies where no player can achieve a better outcome by unilaterally changing their strategy without worsening the outcome for others \cite{wang1993existence}.

\begin{definition}
	\label{def:pareto}
	A strategy profile $(\pi_1, \pi_2)$ is a Pareto equilibrium if and only if the following conditions hold.
	\begin{enumerate}[i)]
		\item There does not exist a strategy $\pi_1' \in \Pi_1$ such that, for $j = 1, 2$,  $\rank_j(\pi_1', \pi_2) \leq \rank_j(\pi_1, \pi_2)$.
		
		\item There does not exist a strategy $\pi_2' \in \Pi_1$ such that, for $j = 1, 2$,  $\rank_j(\pi_1, \pi_2') \leq \rank_j(\pi_1, \pi_2)$.
	\end{enumerate}
	In both cases (i) and (ii), the inequality must hold strictly for at least one $j$.
\end{definition}

In the product game, the set of Pareto equilibria can be determined by computing the set of Pareto states in $V$, which is defined as the set,
\begin{align*}
	\mathsf{Pareto}(V) = \{v \in V \mid \nexists v' \in V: &\rank_1(v') < \rank_1(v) \text{ or } \\
			&\rank_2(v') < \rank_2(v)\}.
\end{align*}
It follows that a strategy profile $(\pi_1, \pi_2)$ is a Pareto equilibrium in $H$ if and only if the last state visited by the path $\Paths_H(v_0, \pi_1, \pi_2)$ is an element of the set $\mathsf{Pareto}(V)$.

%


%

\begin{lemma}
	\label{lma:partially.equal-ranks}
	For any two states $v, v' \in \mathsf{Pareto}(V)$, we have $\rank_1(v) = \rank_1(v')$ and $\rank_2(v) = \rank_2(v')$.
\end{lemma}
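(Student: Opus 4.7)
The plan is to unpack the defining property of $\mathsf{Pareto}(V)$ by applying De Morgan's law to the negated disjunction that appears in its definition. The membership condition $v \in \mathsf{Pareto}(V)$ reads
\[
\nexists v' \in V:\; \rank_1(v') < \rank_1(v) \text{ or } \rank_2(v') < \rank_2(v),
\]
which is logically equivalent to the universal statement
\[
\forall v' \in V:\; \rank_1(v') \geq \rank_1(v) \text{ and } \rank_2(v') \geq \rank_2(v).
\]
So the first step is to observe that a Pareto state $v$ must \emph{simultaneously} attain the global minima $\min_{u \in V} \rank_1(u)$ and $\min_{u \in V} \rank_2(u)$.

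The second step is then immediate: given any two states $v, v' \in \mathsf{Pareto}(V)$, both attain these global minima in both coordinates. Hence $\rank_1(v) = \rank_1(v') = \min_{u \in V} \rank_1(u)$ and likewise $\rank_2(v) = \rank_2(v') = \min_{u \in V} \rank_2(u)$, which is exactly the claim.

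Since the argument is a direct definitional unpacking, I do not anticipate a substantive obstacle. The only subtlety to flag is the scope of the ``or'' inside the $\nexists$ quantifier: when pushed through the negation it becomes a conjunction, which is precisely what forces joint minimality of a Pareto state in both ranks and therefore makes rank-uniformity across the entire set $\mathsf{Pareto}(V)$ automatic. It may also be worth remarking (either before or after the proof) that this uniformity is a distinctive feature of the rank-based Pareto notion used here and is what enables the subsequent characterization of Nash equilibria in the ``both players need cooperation'' case by reduction to the rank values common to all Pareto states.
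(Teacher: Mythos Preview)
Your proposal is correct and follows essentially the same idea as the paper's proof: both hinge on the observation that, because the definition of $\mathsf{Pareto}(V)$ uses ``or'' inside the $\nexists$, any Pareto state must simultaneously minimize $\rank_1$ and $\rank_2$ over all of $V$. The paper phrases this as a one-line contradiction (if $\rank_1(v) < \rank_1(v')$ then $v$ witnesses $v' \notin \mathsf{Pareto}(V)$), whereas you unfold it directly via De Morgan to joint global minimality; the underlying argument is the same.
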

\begin{proof}
	By contradiction.
	Suppose there exist two states $v, v' \in \mathsf{Pareto}(V)$ such that $\rank_1(v) < \rank_1(v')$.
	Then, by definition of $\mathsf{Pareto}(V)$, $v'$ cannot be in $\mathsf{Pareto}(V)$.
\end{proof}

Intuitively, \refLma{lma:partially.equal-ranks} states that any two states in $\mathsf{Pareto}(V)$ have the same rank under $\calE_i$ for $i = 1, 2$.

When both players need cooperation, do they necessarily have an incentive to cooperate? 
For a player to cooperate, the resulting outcome from cooperation should be better than what the player can guarantee without cooperation. 
Formally, we say the players have an incentive to cooperate (in the instrumental sense) if the following inequality does not hold for either $i = 1$ or $i = 2$, $$\maxRank_i(\pi_i^{\max}) < \rank_i(\pi_1^P, \pi_2^P),$$ where $\pi_i^{\max}$ is a player-$i$'s maximal sure winning strategy and $(\pi_1^P, \pi_2^P)$ is a Pareto equilibrium.
If the inequality if true for either $i = 1$ or $i = 2$, then that player is guaranteed a better ranked outcome than it can achieve by cooperating. 
Hence, the player has no incentive to follow a Pareto equilibrium.

\begin{theorem} \label{thm:pareto}
	When players have an incentive to cooperate, a strategy $(\pi_1, \pi_2)$ is a Nash equilibrium if and only if it is a Pareto equilibrium. 
\end{theorem}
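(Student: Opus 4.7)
The plan is to prove the biconditional by two implications, using \refLma{lma:partially.equal-ranks} and the incentive-to-cooperate hypothesis $\maxRank_i(\pi_i^{\max}) \geq \rank_i(\pi_1^P, \pi_2^P)$ as the central bridge between the two equilibrium concepts.

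For the $(\impliedby)$ direction, let $(\pi_1, \pi_2)$ be a Pareto equilibrium, so by the preceding state-based characterization its terminal lies in $\mathsf{Pareto}(V)$ with common ranks $(r_1^P, r_2^P)$ guaranteed by \refLma{lma:partially.equal-ranks}. To show Nash, assume for contradiction that P1 has a unilateral strict improvement $\pi_1'$ with $\rank_1(\pi_1',\pi_2) < r_1^P$. Since the original profile is Pareto, \refDef{def:pareto} forces $\rank_2(\pi_1',\pi_2) > r_2^P$; that is, the deviation harms P2. I would then combine the inequality $\maxRank_1(\pi_1^{\max}) \geq r_1^P$ with the rank-comparison facts of \refProp{prop:rank-comparison} to argue that the existence of a reachable state with $\rank_1$ strictly below $r_1^P$ contradicts the definition of $r_1^P$ as a Pareto rank (no such strictly smaller $\rank_1$ value is consistent with the Pareto-rank structure once both players have incentive to cooperate). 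The symmetric argument handles P2's deviations.

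For the $(\implies)$ direction, suppose $(\pi_1,\pi_2)$ is Nash but not a Pareto equilibrium. By \refDef{def:pareto}, some unilateral Pareto-improvement exists; without loss of generality let P1's deviation $\pi_1'$ satisfy $\rank_j(\pi_1',\pi_2) \leq \rank_j(\pi_1,\pi_2)$ for both $j$, with strict inequality for at least one. If the strict inequality is in $j=1$, P1 strictly improves and Nash is violated immediately. For the subtler tied-rank case, where only $\rank_2$ strictly improves, I would invoke the incentive-to-cooperate hypothesis together with the maximal-sure-winning construction of \refThm{thm:max-swin}: substituting the P1-component $\pi_1^P$ of a Pareto-achieving profile into $\pi_1'$ yields a further refinement that additionally reduces $\rank_1$ below its original value, producing a strict Nash-violating deviation after all.

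The main obstacle is precisely this tied-rank subcase: a Pareto-improving deviation that leaves the deviator's own rank unchanged does not, on its own, breach the Nash condition. Bridging this gap is the key technical step and relies essentially on the joint hypotheses of both-need-cooperation and incentive-to-cooperate, which together pin down $r_i^P$ as the simultaneous minimum reachable rank for each player. Combined with \refLma{lma:partially.equal-ranks} and \refProp{prop:rank-comparison}, this structural tightness lets us promote any non-strict Pareto-improving deviation into one that is strict in the deviator's own preference, closing the implication and with it the theorem.
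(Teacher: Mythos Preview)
Your $(\impliedby)$ direction reaches the right endpoint but by a detour you do not need. The crucial structural fact, which the paper exploits directly, is that the definition of $\mathsf{Pareto}(V)$ in this paper is unusually strong: a state $v$ lies in $\mathsf{Pareto}(V)$ only if \emph{no} $v'\in V$ has $\rank_1(v')<\rank_1(v)$ \emph{or} $\rank_2(v')<\rank_2(v)$, so every Pareto state already realises the global minimum of $\rank_i$ over all of $V$ for \emph{both} $i$. Combined with \refLma{lma:partially.equal-ranks}, this means $r_i^P=\min_{v\in V}\rank_i(v)$, and hence no unilateral deviation can ever produce a strictly smaller $\rank_i$. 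That is the whole argument; the incentive-to-cooperate hypothesis and \refProp{prop:rank-comparison} play no role in this direction. Your contradiction step (``a reachable state with $\rank_1$ strictly below $r_1^P$'') is correct, but its justification is the Pareto-state definition itself, not the cooperation inequality.

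Your $(\implies)$ direction has a genuine gap in the tied-rank subcase. The proposed fix---``substituting the P1-component $\pi_1^P$ of a Pareto-achieving profile into $\pi_1'$ yields a further refinement that additionally reduces $\rank_1$''---is not a valid step: $\pi_1'$ is already a complete P1 strategy, there is no meaningful ``substitution'' of $\pi_1^P$ into it, and replacing $\pi_1'$ by $\pi_1^P$ against the fixed $\pi_2$ gives no control over the resulting rank (the Pareto profile achieves $(r_1^P,r_2^P)$ only against $\pi_2^P$, not against an arbitrary $\pi_2$). Nothing in \refThm{thm:max-swin} bridges this. The paper does not attempt a construction here at all: its proof simply asserts that, under the incentive-to-cooperate hypothesis, the set of outcomes is confined to $\mathsf{Pareto}(V)$, and then invokes \refLma{lma:partially.equal-ranks} to conclude that neither player benefits from deviating. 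That is a much shorter route than yours, though it leaves the restriction-to-Pareto-states claim as the step carrying the weight of the $(\implies)$ direction; your attempt to make that step constructive is the right instinct, but the mechanism you propose does not work.
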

\begin{proof}
	When players have an incentive to cooperate, the set of outcomes is limited to Pareto states, $\mathsf{Pareto}(V)$.
	By \refLma{lma:partially.equal-ranks}, all states in $\mathsf{Pareto}(V)$ have the same rank. 
	In other words, neither player benefits by deviating from their Pareto equilibrium. 
\end{proof}

When no player has an incentive to cooperate, the Nash equilibria can be determined using \refThm{thm:attitude.cooperative} and \refThm{thm:attitude.agnostic} depending on whether the player who does not have an incentive to cooperate is attitudinally cooperative or not. 
This approach also applies when the $\mathsf{Pareto}(V)$ set is empty, which means that mutually beneficial cooperation is not possible in the current game.

\section{Experiment}
\label{sec:experiment}
We illustrate the application of the theoretical results using a drone delivery scenario depicted in \refFig{fig:gw-sim}.
The environment is a $5 \times 5$ gridworld featuring two drones, A and B, which must transport three packages from locations $p_1$, $p_2$, and $p_3$ to their respective destinations $d_1$, $d_2$, and $d_3$ while maximally satisfying their preferences over delivery schedules.
The drones can navigate in four compass directions: \texttt{N, E, S, W}. 
When a drone is in a cell containing package, \ie, a cell labeled $p_1, p_2, p_3$ in \refFig{fig:gw-sim}, they may collect the package using $\mathtt{pick}$ action.
When both drones occupy the $9$-neighboring cells, they can exchange package-$i$ if they have it using the $\mathtt{give}_i$ action where $i = 1, 2, 3$.
When both drones occupy the same cell, they can damage the other drone using $\mathtt{attack}$ action. 
The gridworld contains walls and obstacles that are bouncy, meaning if a drone's action leads it outside the grid's boundaries or into an obstacle, it returns to the cell from which it initiated the action.
The drones operate in turns, with each move action costing one unit of time while the actions $\mathtt{pick}, \mathtt{give}_i, \mathtt{attack}$ are instantaneous. 
Both drones must complete their tasks within a given time limit $T_{\max{}}$.

The key design question we ask is: Given that drone A is located at $(0, 0)$, where to place drone B for it to maximally satisfy its preference assuming both drones follow a Nash equilibrium? 

\subsection{Aligned Preferences} 
Consider the case when player preferences are aligned.
In practice, this may correspond to the situation when A and B are drones controlled by the same company .
Here, we expect the placement of drone B to maximally satisfy the preferences of both drones.

We consider the environment shown in \refFig{fig:gw-sim-a} with $T_{\max} = 10$. 
We express the preferences of both drones over four outcomes: $\varphi_i = \Eventually d_i$ for $i = 1, 2, 3$ and $\varphi_4 = \Eventually d_1 \land \Eventually (d_2 \lor d_3)$. 
The preferences are represented by the following \prefltlf~formula, which states that the drones prefer delivering the package at $p_1$ and at least one of those at $p_2$ and $p_3$ instead of delivering a only one package.
\begin{align*}
	\psi = 
	(\varphi_4 \strictpref \varphi_1) \prefAnd 
	(\varphi_4 \strictpref \varphi_2) \prefAnd 
	(\varphi_4 \strictpref \varphi_3) 
\end{align*}
It is assumed that delivering at least one package is strictly preferred to delivering none.

The preference automaton for $\psi_1$ is shown in \refFig{fig:aligned.pref-aut}. 
Specifically, sub-figure (a) shows the semi-automaton component of the preference automaton that tracks the progress made towards completion of various objectives in $\Phi$, while sub-figure (b) shows a preference graph that encodes the preorder $E$ on the states of the semi-automaton. 
The nodes of preference graph represent equivalence classes of semi-automaton states.  
An edge from one node to another in the preference graph represents that all semi-automaton states belonging to the partition defined by the latter node are strictly preferred to all states belonging to the partition defined by the former.
For example, consider a path in the game where B first delivers package 2 and then A delivers package 1.
When B delivers package 2, the semi-automaton transitions from its initial state $0$ to state $5$. 
Afterwards, when A delivers package 1, the semi-automaton transitions from state $5$ to state $6$. 
Similarly, consider a path where B first delivers package 3 and then A delivers package 2.
In this case, the semi-automaton transitions from state $0$ to $1$, and then from state $1$ to $2$.
To determine the preference between the two paths, we compare the nodes that define the partition containing the states $6$ and $2$ in the preference graph.
In this case, state $6$ corresponds to node $0$ and state $2$ corresponds to node $1$.
Since there exists a node $1$ to $0$, the first path is strictly preferred to the second.

\begin{figure}[tb]
	\centering
	\begin{multicols}{2}
		\centering
		\begin{subfigure}{\linewidth}
			\includegraphics[width=0.9\linewidth]{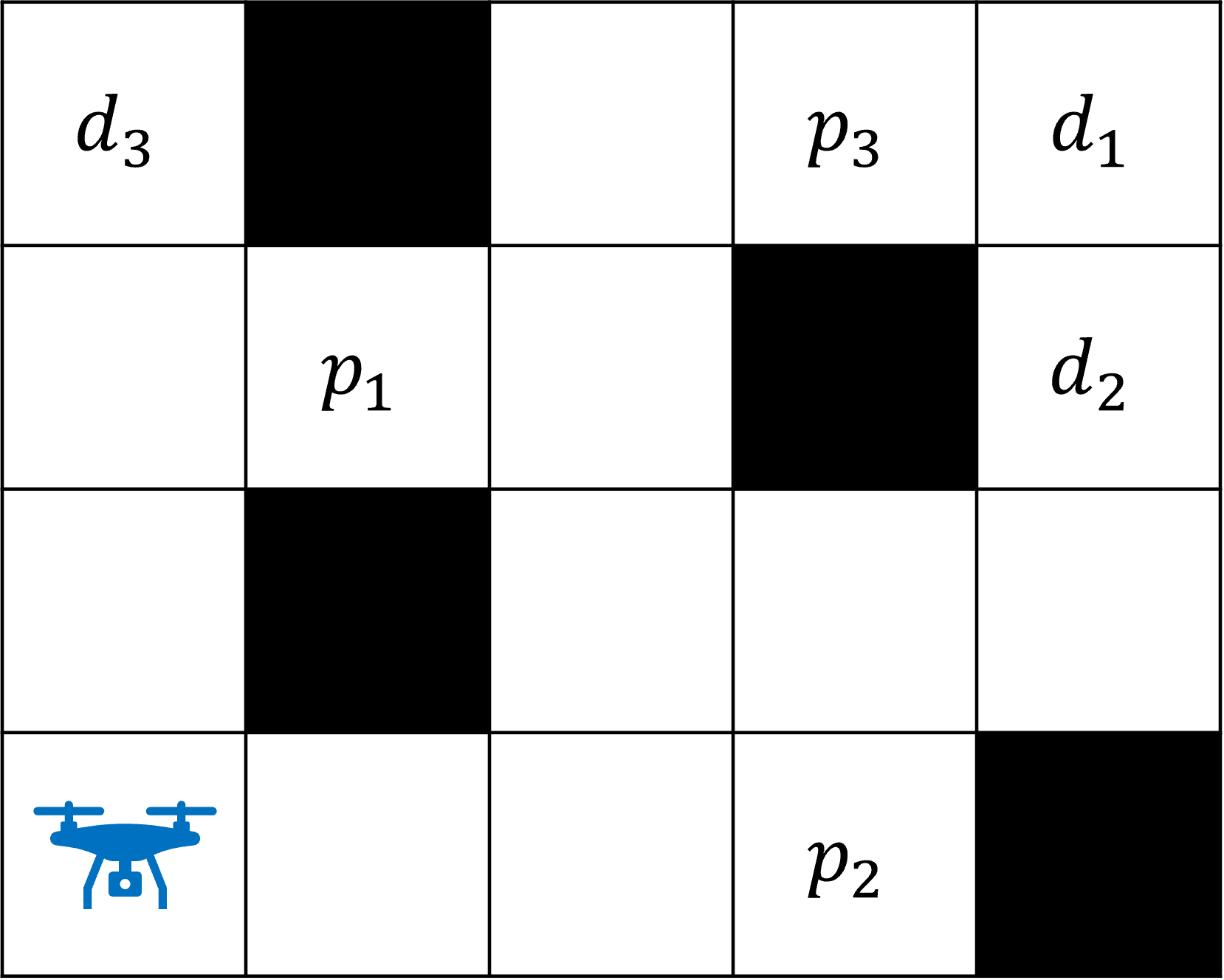}
			\caption{Scenario 1}
			\label{fig:gw-sim-a}
		\end{subfigure}
		\par
		\begin{subfigure}{\linewidth}
			\includegraphics[width=0.9\linewidth]{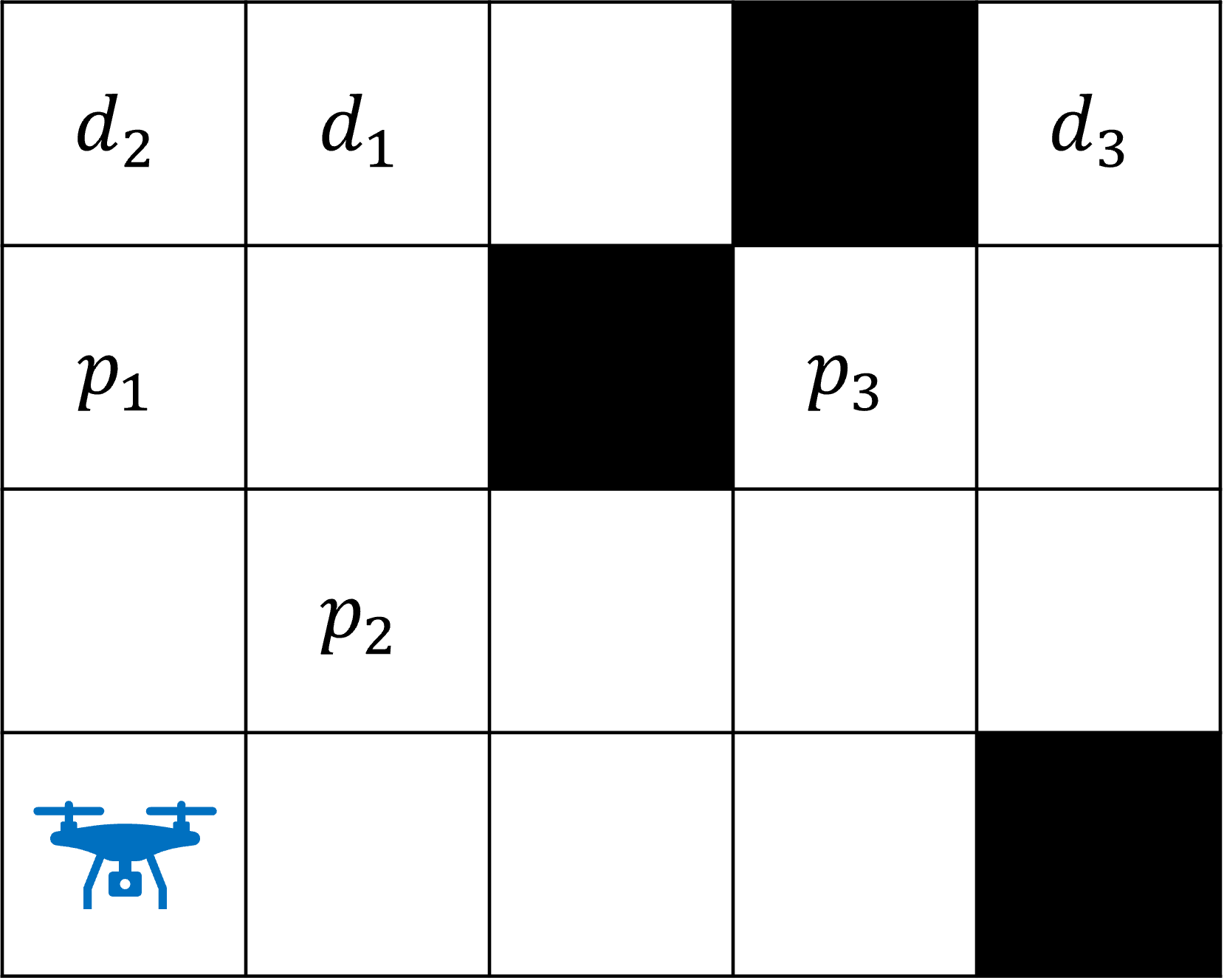}
			\caption{Scenario 2}
			\label{fig:gw-sim-b}
		\end{subfigure}
	\end{multicols}
	\caption{Two drone delivery environments.}
	\label{fig:gw-sim}
\end{figure}

\begin{figure}[tb]
	\centering
	\includegraphics[scale=0.3]{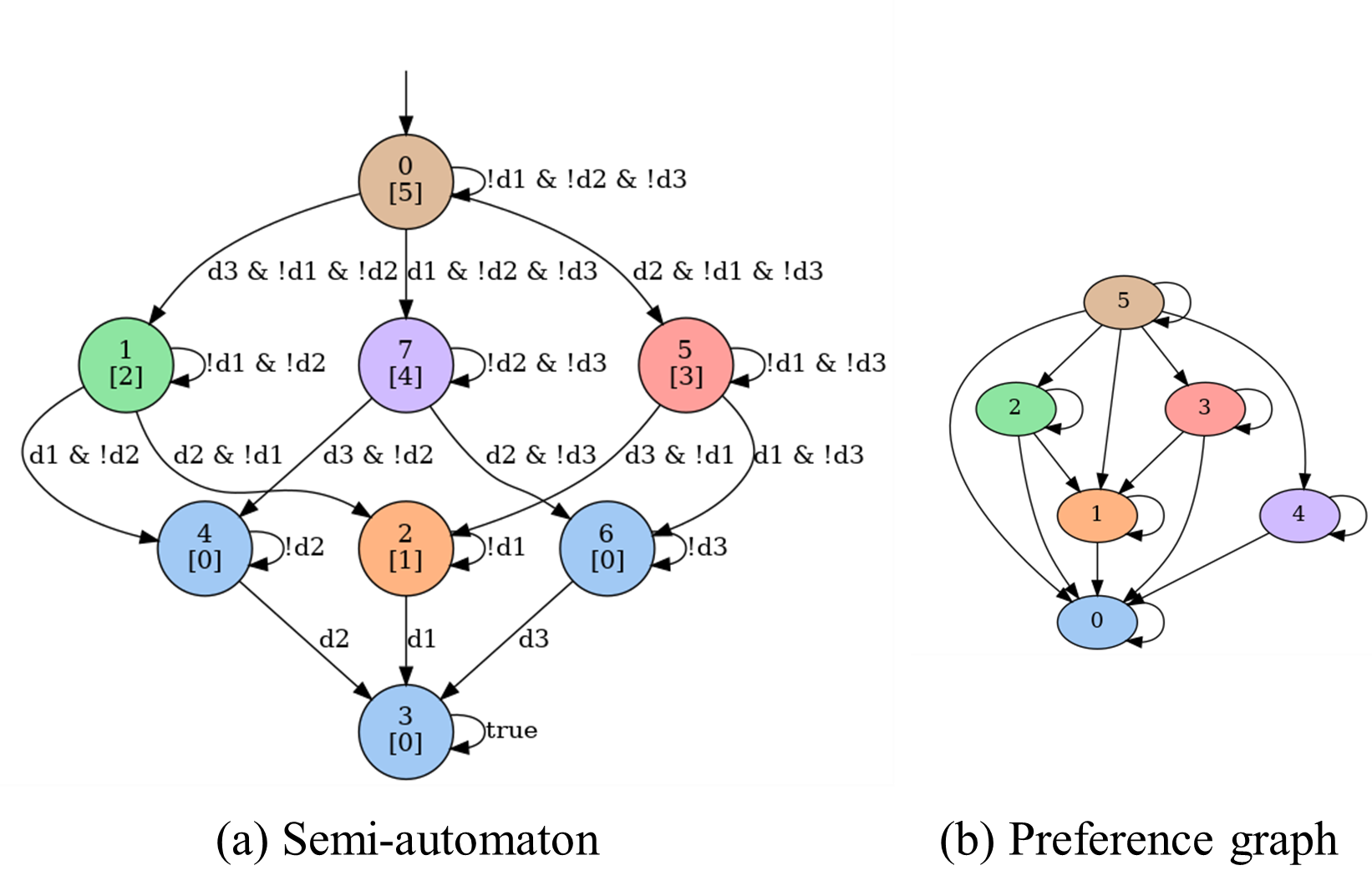}
	\caption{Preference automaton for \prefltlf~formula $\psi = (\varphi_4 \strictpref \varphi_1) \prefAnd (\varphi_4 \strictpref \varphi_2) \prefAnd (\varphi_4 \strictpref \varphi_3)$.}
	\label{fig:aligned.pref-aut}
\end{figure}


To determine the placement of drone B, we compute the rank of a maximal reachable state that can be visited when B starts from each eligible cell.
These ranks are shown in \refFig{fig:aligned.ranks}. 
The rank $-1$ (shown in black) depicts that B cannot start from that cell.

\begin{figure}[tb]
	\centering
	\includegraphics[scale=0.45]{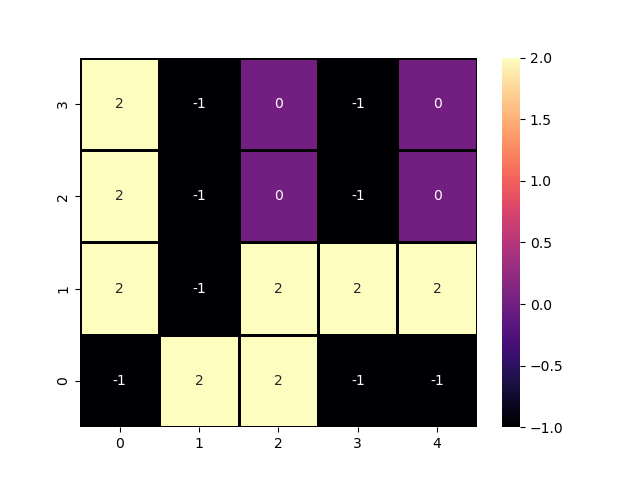}
	\caption{The rank of maximal reachable state in scenario 1.}
	\label{fig:aligned.ranks}
\end{figure}

We observe that if B is placed at a cell among $(2, 2), (2, 3), (4, 2, (4, 3)$, the drones can satisfy $\varphi_4$, which is their most preferred objective. 
To satisfy $\varphi_4$, they must coordinate their strategies.
For instance, when B starts at $(2, 3)$, the Nash equilibrium strategy of B requires it to pick package 3 and return to cell $(2, 3)$.  
Whereas, that of A requires it to pick package 1 by visiting cell $(1, 2)$. 
At this stage, they exchange package 1 and package 3, and then A delivers package 3 by visiting $(0, 3)$ and B delivers package 1 by visiting $(4, 3)$. 
Under this Nash equilibrium, $\varphi_4$ can be satisfied within $9$ time steps.

\begin{figure}[tb]
	\centering
	\includegraphics[scale=0.45]{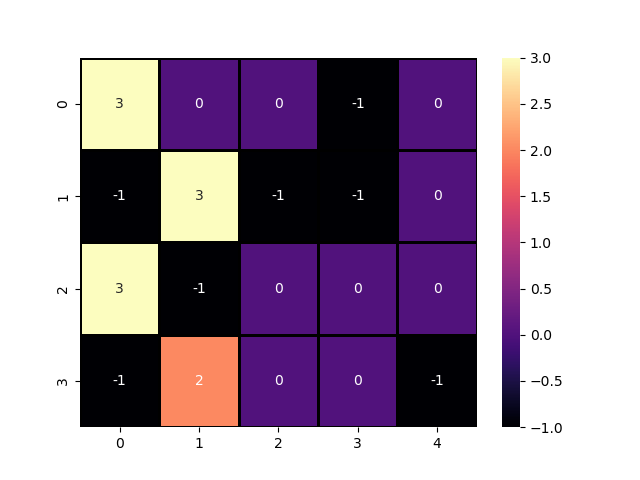}
	\caption{The smallest rank achievable by drone A by following a maximal sure winning strategy in scenario 2.}
	\label{fig:opposite.ranks}
\end{figure}

For case when B starts at $(2, 3)$ also illustrates that the Nash equilibrium is not unique. 
There exists another Nash strategy for B under which the exchange of package 1 and package 2 takes place when A is at $(1, 2)$ and B is at $(2, 2)$. 
In this case, $\varphi_4$ ise satisfied in $10$ time steps.

On the other hand, if B starts from any cell where the maximal reachable rank is equal to $2$, the drones can either package 2 or package 3. 
This is because it takes at least $11$ steps for the drones to pick and deliver any two packages from these initial states.

Therefore, we conclude that drone B must start at one of the cells among $(2, 2), (2, 3), (4, 2), (4, 3)$ to achieve the best possible outcome. 
Since the player preferences are aligned, the maximal reachable state for drone A also has the same rank as shown in \refFig{fig:aligned.ranks}.

\subsection{Completely Opposite Preferences} 

Consider the environment shown in \refFig{fig:gw-sim-b} with $T_{\max} = 10$. 
We express the preferences of both drones over four outcomes: $\varphi_i = \Eventually d_i$ for $i = 1, 2, 3$ and $\varphi_4 = \Eventually d_2 \land \Eventually d_3$. 
The preference of drone A is represented by the following \prefltlf~formula, which states that the A prefers delivering the package 1 to delivering only package 2 or only package 3, and delivering both packages 2 and 3 over delivering only package 2 or only package 3. 
\begin{align*}
	\psi_1 = 
	(\varphi_1 \strictpref \varphi_2) \prefAnd 
	(\varphi_1 \strictpref \varphi_3) \prefAnd 
	(\varphi_4 \strictpref \varphi_2) \prefAnd 
	(\varphi_4 \strictpref \varphi_3) 
\end{align*}
It is assumed that delivering at least one package is strictly preferred to delivering none.
The preference automaton for $\psi_1$ is shown in \refFig{fig:opposite.pref-aut}.
Since the preferences of drone B are completely opposite to that of drone A, we have $\psi_2 = (\varphi_2 \strictpref \varphi_1) \prefAnd (\varphi_3 \strictpref \varphi_1) \prefAnd (\varphi_2 \strictpref \varphi_4) \prefAnd (\varphi_3 \strictpref \varphi_4)$.  
Moreover, for drone B, delivering no package is strictly preferred to delivering at least one package.

\begin{figure}[tb]
	\centering
	\includegraphics[scale=0.26]{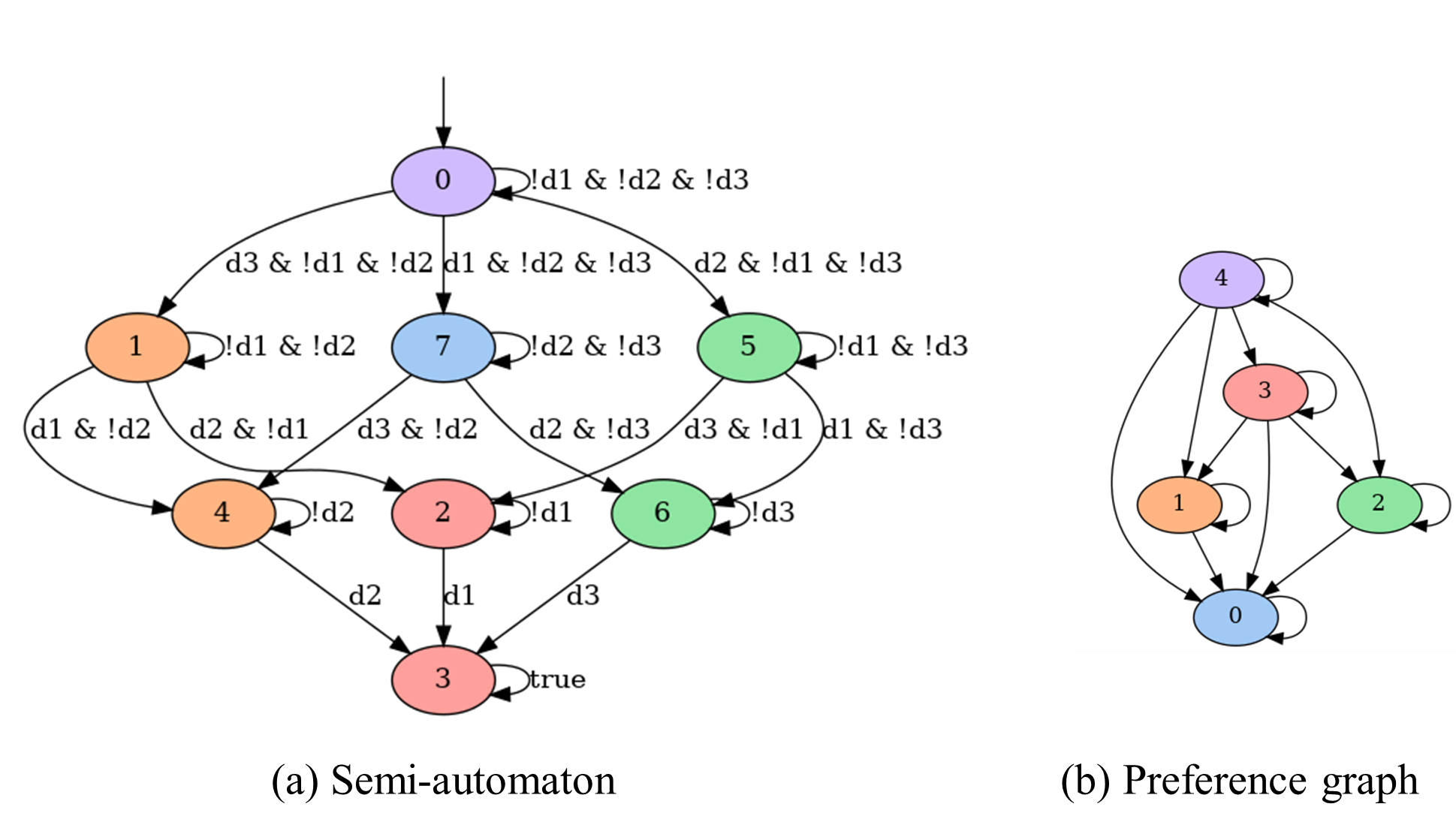}
	\caption{Preference automaton for \prefltlf~formula $\psi_1 = (\varphi_1 \strictpref \varphi_2) \prefAnd (\varphi_1 \strictpref \varphi_3) \prefAnd (\varphi_4 \strictpref \varphi_2) \prefAnd (\varphi_4 \strictpref \varphi_3)$.}
	\label{fig:opposite.pref-aut}
\end{figure}

\refFig{fig:opposite.ranks} shows in each cell the smallest rank that drone A can guarantee to achieve regardless of the strategy employed by drone B, when B starts from that cell.
For instance, the value $3$ at cells $(0, 1)$, $(0, 3)$, or $(1, 2)$ denotes that drone A cannot deliver any packages if B starts from any of these cells.
Specifically, in this case, drone B has a strategy to prevent A from either picking or dropping packages. 
For instance, if drone B starts at the cell $(0, 1)$, it has a strategy to prevent drone A from picking up any package.
This is because B can reach the cells labeled $p_1, p_2$ and $p_3$ before drone A can reach them, and A cannot enter the cell with B since B can use the $\mathtt{attack}$ action to disable it.
However, when B cannot prevent A from picking up any of the three packages, A can enforce a rank $0$ outcome against every possible strategy of B.

Therefore, we conclude that drone B must start at either $(0, 1)$, $(0, 3)$, or $(1, 2)$ to achieve the best possible outcome for itself.

\subsection{Partially Aligned Preferences}

We consider the environment shown in \refFig{fig:gw-sim-b} with $T_{\max} = 10$ and the same four outcomes from the previous subsection: $\varphi_i = \Eventually d_i$ for $i = 1, 2, 3$,  $\varphi_4 = ((\neg d_1 \land \neg d_3) \until d_2) \land \Eventually (d_2 \land \Eventually d_1)$, and $\varphi_5 = (\neg d_2 \land \neg d_3) \until d_1$. 
The preference of drone A and B are as follows.
\begin{align*}
	\psi_1 = 
	(\varphi_5 \strictpref \varphi_4) \prefAnd 
	(\varphi_4 \strictpref \varphi_1) \prefAnd 
	(\varphi_4 \strictpref \varphi_2) \prefAnd 
	(\varphi_4 \strictpref \varphi_3) \\ 
	%
	\psi_2 = 
	(\varphi_4 \strictpref \varphi_5) \prefAnd 
	(\varphi_5 \strictpref \varphi_1) \prefAnd 
	(\varphi_5 \strictpref \varphi_2) \prefAnd 
	(\varphi_5 \strictpref \varphi_3). 
\end{align*}
In words, drone A prefers delivering package 1 before any other package. 
If this is not possible, then it prefers to first deliver package 2 and then package 1.
If neither of above two specifications are possible, then drone A prefers delivering at least one package. 
On the other hand, drone B has preference opposite to A about the sequence of delivering packages 1 and 2. 
The preference automata for $\psi_1, \psi_2$ are not included due to space limitation. 

In \refFig{fig:partially_aligned.cooperation}, each cell contains a tuple that depicts whether drone A and B need cooperation when B starts from that cell. 
This tuple is determined using \refDef{def:need-of-cooperation}, by comparing the maximum rank outcome a drone can enforce with the minimum rank outcome the drone may achieve if the other drone was altruistic. 
For example, the value $(F, T)$ in cell $(2, 0)$ denotes that drone B needs cooperation while A does not.



Consider the case when drone B starts at $(2, 3)$. 
Here, B needs cooperation because it cannot pick and deliver any package by itself within $10$ steps.
However, A does not need cooperation because it can surely deliver package 1, thereby satisfying its most preferred outcome. 
In this scenario, if A was not attitudinally cooperative, the rank that B achieves is $2$. 
On the contrary, if A was attitudinally cooperative, then the Nash equilibrium requires A to pick package 1 and B to pick package 2. 
When the A is at $(1, 2)$ and B is at $(1, 1)$, B gives package 2 to A. 
Then, A delivers package 1 first, and then delivers package 2.
In this way, A achieves its most preferred outcome while also helping B to satisfy a rank $1$ outcome.

Now, consider the situation where drone B is at $(2, 1)$. 
Here, both drones need cooperation because, for A to achieve its best outcome, B must not deliver package 3 before A delivers package 1.
In fact, the rank of outcomes ensured by the maximal sure winning strategies of both drones is $2$. 
Therefore, we use \refThm{thm:pareto} to determine the Nash equilibrium. 
The set of Pareto states include states where drone A achieves rank $1$ and drone B achieves rank $0$.
For this, the drones must coordinate their strategies to deliver package 2 first and then package 1. 
Specifically, drone A must choose $E$ at its first step, and allow B to first pick package 2 and then package 1. 
The drone B then visits $(0, 3)$ to deliver package 2 and then gives package 1 to A, who is at $(1, 3)$ to deliver package 1.
The Pareto equilibrium does not allow A to move north at $(0, 0)$ since it has a sure winning strategy to deliver package 1 in this case.

Since drone B needs cooperation from all initial positions, it should start from the state $(2, 1)$, where the Pareto strategy ensures the most preferred outcome for B. 
If this were not the case, then B should consider A's attitude and determine what is the best rank it can achieve if it started from each eligible cell and select the smallest.

\begin{figure}[t]
	\centering
	\includegraphics[scale=0.35]{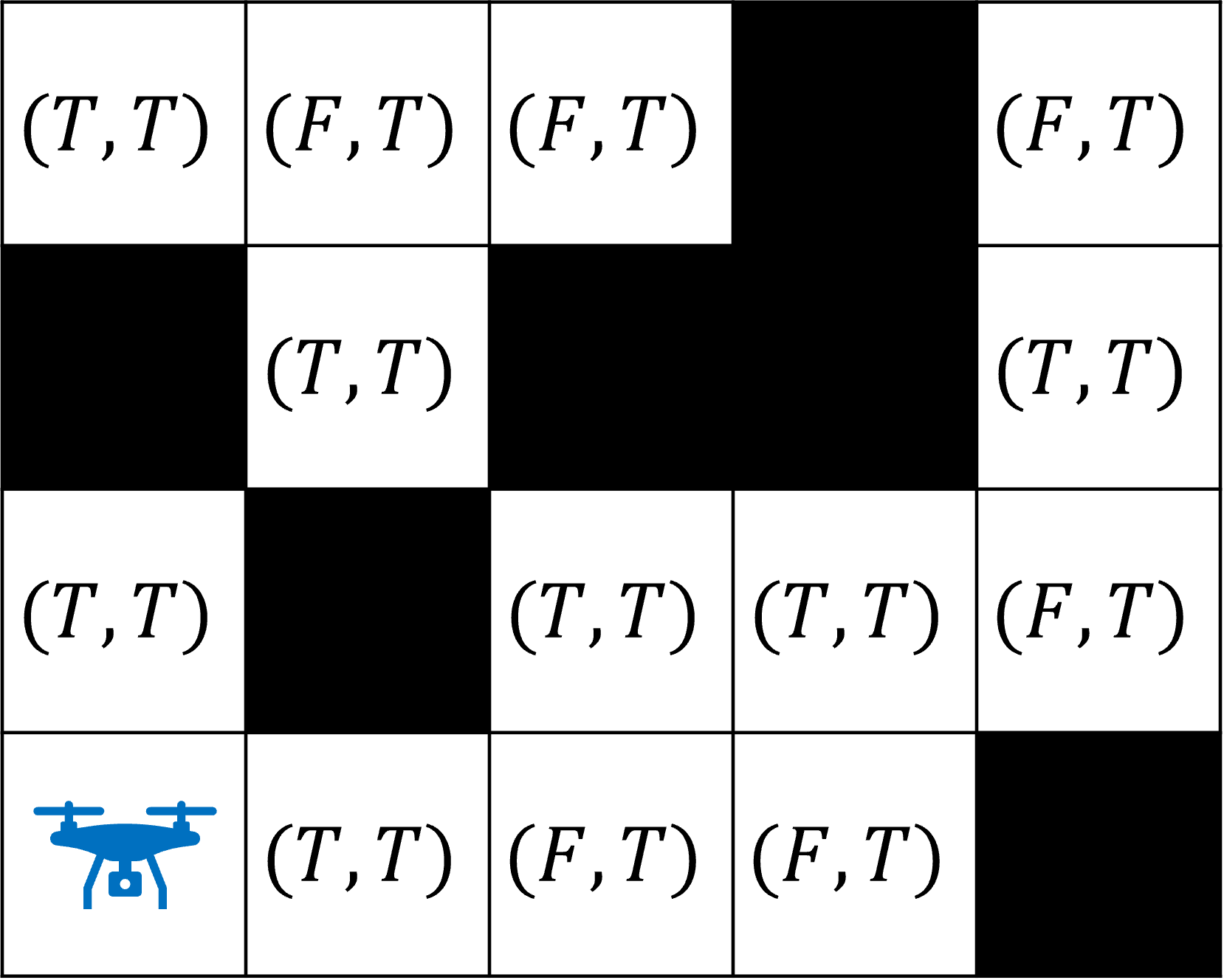}
	\caption{Need for cooperation in scenario 2. The tuple in each cell denotes whether drones A and B need cooperation.}
	\label{fig:partially_aligned.cooperation} 
\end{figure}

\section{Conclusion} 
We studied the problem of characterizing Nash equilibrium in deterministic two-player turn-based games on graphs where players aim to maximally satisfy their preference over \ac{ltlf} formulas.
We developed an automata-theoretic approach to computing the set of Nash equilibria under various scenarios of preference alignment: fully aligned, partially aligned, and completely opposite.
We demonstrated that player attitudes affects the Nash equilibria when player preferences are partially aligned, thereby gaining key insights into when a player needs cooperation, and when a player has instrumental or attitudinal incentive to cooperate. 
We also established the existence of Nash equilibria in all scenarios.

However, the current study only considered the deterministic turn-based games on graphs. 
Several applications in robotics and AI are probabilistic in nature and involve concurrent interactions between players.
In future, we will study the characterization of Nash equilibria for stochastic concurrent games on graphs. 


\bibliographystyle{plain}        
\bibliography{sample}           

\begin{thebibliography}{10}

\bibitem{adinolf2020my}
Sonam Adinolf, Peta Wyeth, Ross Brown, and Joel Harman.
\newblock My little robot: user preferences in game agent customization.
\newblock In {\em Proceedings of the annual symposium on computer-human interaction in play}, pages 461--471, 2020.

\bibitem{bade2005nash}
Sophie Bade.
\newblock Nash equilibrium in games with incomplete preferences.
\newblock {\em Economic Theory}, 26:309--332, 2005.

\bibitem{baier2008principles}
Christel Baier and Joost-Pieter Katoen.
\newblock {\em Principles of model checking}.
\newblock MIT press, 2008.

\bibitem{baier2008planning}
Jorge~A Baier and Sheila~A. McIlraith.
\newblock Planning with {{Preferences}}.
\newblock {\em AI Magazine}, 29(4):25, 2008.

\bibitem{bosi2012continuous}
Gianni Bosi and Gerhard Herden.
\newblock Continuous multi-utility representations of preorders.
\newblock {\em Journal of Mathematical Economics}, 48:212--218, 2012.

\bibitem{bouyssou2013decision}
Denis Bouyssou, Didier Dubois, Henri Prade, and Marc Pirlot.
\newblock {\em Decision making process: Concepts and methods}.
\newblock John Wiley \& Sons, 2013.

\bibitem{byrnes2002poset}
Steven~J. Byrnes.
\newblock Poset-game periodicity.
\newblock 2002.

\bibitem{chatterjee2007algorithms}
Krishnendu Chatterjee, Laurent Doyen, Thomas~A Henzinger, and Jean-Fran{\c{c}}ois Raskin.
\newblock Algorithms for omega-regular games with imperfect information.
\newblock {\em Logical Methods in Computer Science}, 3, 2007.

\bibitem{chatterjee2023stochastic}
Krishnendu Chatterjee, Joost~Pieter Katoen, Stefanie Mohr, Maximilian Weininger, and Tobias Winkler.
\newblock Stochastic games with lexicographic objectives.
\newblock {\em Formal Methods in System Design}, pages 1--41, 3 2023.

\bibitem{correia2019choose}
Filipa Correia, Sofia Petisca, Patr{\'\i}cia Alves-Oliveira, Tiago Ribeiro, Francisco~S Melo, and Ana Paiva.
\newblock “i choose... you!” membership preferences in human--robot teams.
\newblock {\em Autonomous Robots}, 43:359--373, 2019.

\bibitem{de2007concurrent}
Luca De~Alfaro, Thomas~A Henzinger, and Orna Kupferman.
\newblock Concurrent reachability games.
\newblock {\em Theoretical computer science}, 386(3):188--217, 2007.

\bibitem{de2013linear}
Giuseppe De~Giacomo and Moshe~Y Vardi.
\newblock Linear temporal logic and linear dynamic logic on finite traces.
\newblock In {\em IJCAI'13 Proceedings of the Twenty-Third international joint conference on Artificial Intelligence}, pages 854--860. Association for Computing Machinery, 2013.

\bibitem{duffie1987stochastic}
Darrell Duffie.
\newblock Stochastic equilibria with incomplete financial markets.
\newblock {\em Journal of Economic Theory}, 41(2):405--416, 1987.

\bibitem{duffie1986equilibrium}
Darrell Duffie and Wayne Shafer.
\newblock Equilibrium in incomplete markets: Ii: Generic existence in stochastic economies.
\newblock {\em Journal of Mathematical Economics}, 15(3):199--216, 1986.

\bibitem{fu2021probabilistic}
Jie Fu.
\newblock Probabilistic planning with preferences over temporal goals.
\newblock In {\em 2021 American Control Conference (ACC)}, pages 4854--4859. IEEE, 2021.

\bibitem{gradel2003automata}
Erich Gr{\"a}del, Wolfgang Thomas, and Thomas Wilke.
\newblock {\em Automata, logics, and infinite games: a guide to current research}, volume 2500.
\newblock Springer, 2003.

\bibitem{kulkarni2022opportunistic}
Abhishek~Ninad Kulkarni and Jie Fu.
\newblock Opportunistic qualitative planning in stochastic systems with preferences over temporal logic objectives.
\newblock {\em arXiv preprint arXiv:2203.13803}, 2022.

\bibitem{mitsunaga2008adapting}
Noriaki Mitsunaga, Christian Smith, Takayuki Kanda, Hiroshi Ishiguro, and Norihiro Hagita.
\newblock Adapting robot behavior for human--robot interaction.
\newblock {\em IEEE Transactions on Robotics}, 24(4):911--916, 2008.

\bibitem{morschheuser2017games}
Benedikt Morschheuser, Marc Riar, Juho Hamari, and Alexander Maedche.
\newblock How games induce cooperation? a study on the relationship between game features and we-intentions in an augmented reality game.
\newblock {\em Computers in human behavior}, 77:169--183, 2017.

\bibitem{rahmani2023probabilistic}
Hazhar Rahmani, Abhishek~N Kulkarni, and Jie Fu.
\newblock Probabilistic planning with partially ordered preferences over temporal goals.
\newblock In {\em 2023 IEEE International Conference on Robotics and Automation (ICRA)}, pages 5702--5708. IEEE, 2023.

\bibitem{rahmani2024preference}
Hazhar Rahmani, Abhishek~N Kulkarni, and Jie Fu.
\newblock Preference-based planning in stochastic environments: From partially-ordered temporal goals to most preferred policies.
\newblock {\em arXiv preprint arXiv:2403.18212}, 2024.

\bibitem{rahmani2020what}
Hazhar Rahmani and Jason~M O’Kane.
\newblock What to do when you can’t do it all: Temporal logic planning with soft temporal logic constraints.
\newblock In {\em 2020 IEEE/RSJ International Conference on Intelligent Robots and Systems (IROS)}, pages 6619--6626. IEEE, 2020.

\bibitem{rezaei2024social}
Sarah Rezaei, Stephanie Rosenkranz, Utz Weitzel, and Bastian Westbrock.
\newblock Social preferences on networks.
\newblock {\em Journal of Public Economics}, 234:105113, 2024.

\bibitem{salehi2015preference}
Amirali Salehi-Abari and Craig Boutilier.
\newblock Preference-oriented social networks: Group recommendation and inference.
\newblock In {\em Proceedings of the 9th ACM Conference on Recommender Systems}, pages 35--42, 2015.

\bibitem{sen1997maximization}
Amartya Sen.
\newblock Maximization and the act of choice.
\newblock {\em Econometrica: Journal of the Econometric Society}, pages 745--779, 1997.

\bibitem{shapley1953stochastic}
Lloyd~S Shapley.
\newblock Stochastic games.
\newblock {\em Proceedings of the national academy of sciences}, 39(10):1095--1100, 1953.

\bibitem{shoham2008multiagent}
Yoav Shoham and Kevin Leyton-Brown.
\newblock {\em Multiagent systems: Algorithmic, game-theoretic, and logical foundations}.
\newblock Cambridge University Press, 2008.

\bibitem{soltys2011complexity}
Michael Soltys and Craig Wilson.
\newblock On the complexity of computing winning strategies for finite poset games.
\newblock {\em Theory of Computing Systems}, 48:680--692, 4 2011.

\bibitem{tumova2013least}
Jana Tumova, Gavin~C Hall, Sertac Karaman, Emilio Frazzoli, and Daniela Rus.
\newblock Least-violating control strategy synthesis with safety rules.
\newblock In {\em Proceedings of the 16th international conference on Hybrid systems: computation and control}, pages 1--10. ACM, 2013.

\bibitem{wang1993existence}
SY~Wang.
\newblock Existence of a pareto equilibrium.
\newblock {\em Journal of Optimization Theory and Applications}, 79(2):373--384, 1993.

\bibitem{wongpiromsarn2021}
Tichakorn Wongpiromsarn, Konstantin Slutsky, Emilio Frazzoli, and Ufuk Topcu.
\newblock Minimum-violation planning for autonomous systems: Theoretical and practical considerations.
\newblock In {\em 2021 American Control Conference}, 2021.
\newblock submitted.

\bibitem{zahoransky2021partial}
Valeria Zahoransky, Julian Gutierrez, Paul Harrenstein, and Michael Wooldridge.
\newblock Partial order games.
\newblock {\em Games}, 13:2, 12 2021.

\bibitem{zanardi2022posetal}
Alessandro Zanardi, Gioele Zardini, Sirish Srinivasan, Saverio Bolognani, Andrea Censi, Florian Dorfler, and Emilio Frazzoli.
\newblock Posetal games: Efficiency, existence, and refinement of equilibria in games with prioritized metrics.
\newblock {\em IEEE Robotics and Automation Letters}, 7:1292--1299, 4 2022.

\bibitem{zielonka1998infinite}
Wieslaw Zielonka.
\newblock Infinite games on finitely coloured graphs with applications to automata on infinite trees.
\newblock {\em Theoretical Computer Science}, 200(1-2):135--183, 1998.

\bibitem{zlotkin1990negotiation}
Gilad Zlotkin and Jeffrey~S Rosenschein.
\newblock Negotiation and conflict resolution in non-cooperative domains.
\newblock In {\em AAAI}, pages 100--105, 1990.

\bibitem{evren2011incomplete}
Özgür Evren and Efe~A. Ok.
\newblock On the multi-utility representation of preference relations.
\newblock {\em Journal of Mathematical Economics}, 47:554--563, 8 2011.

\end{thebibliography}



\end{document}